\newtheorem{satz}{Theorem}[section]
\newtheorem{lem}[satz]{Lemma}
\newtheorem{prop}[satz]{Proposition}
\newtheorem{defi}[satz]{Definition}
\newtheorem{bem}[satz]{Remark}
\newcommand{\sgn}{\operatorname{sgn}}
\renewcommand{\Re}{\operatorname{Re}}
\newcommand{\vvect}[2]{\left(\begin{array}{c} #1 \\ #2 \end{array}\right)}
\title[Schr\"odinger evolution of superoscillations with $\delta$- and $\delta'$-potentials]{\bf Schr\"odinger evolution of superoscillations with $\delta$- and $\delta'$-potentials}
\author[Yakir Aharonov]{Yakir Aharonov}
\address{(YA)
Schmid College of Science and Technology, Chapman University, Orange 92866, CA, USA}
\email{aharonov@chapman.edu}
\author[Jussi Behrndt]{Jussi Behrndt}
\address{(JB) Institut f\"ur Angewandte Mathematik, Technische Universit\"{a}t Graz, Steyrergasse 30, 8010 Graz, Austria}
\email{behrndt@tugraz.at}
\author[Fabrizio Colombo]{Fabrizio Colombo}
\address{(FC) Politecnico di Milano\\Dipartimento di Matematica\\Via E. Bonardi, 9\\20133 Milano, Italy}
\email{fabrizio.colombo@polimi.it}
\author[Peter Schlosser]{Peter Schlosser}
\address{(PS) Institut f\"ur Angewandte Mathematik, Technische Universit\"{a}t Graz, Steyrergasse 30, 8010 Graz, Austria}
\email{schlosser@tugraz.at}
\begin{document}

\begin{abstract}\vspace{1cm}
In this paper we study the time persistence of superoscillations as the initial data of the time dependent Schr\"odinger equation with $\delta$- and $\delta'$-potentials.
It is shown that the sequence of solutions converges uniformly on compact sets, whenever the initial data converges in the topology of the entire function space $A_1(\mathbb{C})$.
Convolution operators acting in this space are our main tool. In particular, a general result about the existence of such operators is proven.
Moreover, we provide an explicit formula as well as the large time asymptotics for the time evolution of a plane wave under $\delta$- and $\delta'$-potentials.
\end{abstract}

\maketitle

\par\noindent AMS Classification: 32A15, 32A10, 47B38.
\par\noindent \textit{Key words}: Superoscillating functions,  convolution operators, Schr\"odinger equation, singular potential, entire functions with growth conditions.

\vskip 1cm

\section{Introduction}

Superoscillating functions  have an oscillatory behaviour which is locally faster than their fastest Fourier component. This paradoxical property was discovered by the first author
and his collaborators in their work about weak measurements \cite{AAV88} and afterwards investigated from a mathematical and quantitative point of view by M.\,V. Berry \cite{B94}. In antenna theory
this phenomenon was discovered by G. Toraldo di Francia in  \cite{T52} as pointed out also in \cite{B14}.
Several authors have contributed to this field and without claiming completeness we mention \cite{AR05,AV90,BD09,BP06} and also \cite{FK04,FK06,LF14_1,LF14_2,K18,RRSYZ14,RYZ17}. More recently,
special emphasis was given to the mathematical aspect of superoscillations, see, e.g., \cite{ACNSST12,ACSST11,ACSST13,ACSST15,ACSST16,ACSST17_1,ACSST17_2,CGS17}.

\vspace{0.2cm}

The special topic we want to investigate in this paper is the Schr\"{o}dinger time evolution of superoscillating functions $F$, that is, we consider
\begin{equation}\label{Eq_Schroedinger}
\begin{split}
i\frac{\partial}{\partial t}\Psi(t,x)&=\left(-\frac{\partial^2}{\partial x^2}+V(x)\right)\Psi(t,x),\qquad t>0\,,x\in\mathbb{R}, \\
\Psi(0,x)&=F(x),\hspace{4.3cm} x\in\mathbb{R},
\end{split}
\end{equation}
where $V$ is the potential and $F$ is the initial datum, which is assumed to be superoscillating. Already a number of different potentials were investigated, see the survey papers \cite{ACST18,ASTY18,ACSS18,B19} and references therein.
Our aim is to add the one dimensional $\delta$-potential and $\delta'$-potential to this list.
We mention that the $\delta$-potential was already treated in \cite{BCS19}; however, a (technical) condition on the strength of the $\delta$-interaction was imposed there, which we
are able to avoid in the present paper.
A detailed discussion of Schr\"{o}dinger operators with $\delta$ and $\delta'$-point interactions can be found in the standard monograph \cite{AGHH05}. For further reading on Schr\"{o}dinger operators with singular potentials
we refer the reader to, e.g., \cite{BEL14,BLL13,B88,BEKS94,C09,EKMT14,EGU19,E08,EK15,ER16,H89,KM10,KM14,LO16,LR15,MPS16} and the references therein.

\vspace{0.2cm}

We briefly illustrate the concept of superoscillations. Consider for some fixed $k\in\mathbb{R}$ with $|k|>1$ the sequence of functions
\begin{equation}\label{Eq_Example_function}
F_n(z,k)=\sum\limits_{j=0}^nC_j(n,k)e^{i(1-\frac{2j}{n})z},\qquad z\in\mathbb{C},
\end{equation}
with coefficients
\begin{equation*}
C_j(n,k)=\vvect{n}{j}\left(\frac{1+k}{2}\right)^{n-j}\left(\frac{1-k}{2}\right)^j.
\end{equation*}
The notion {\it superoscillatory} now comes from the fact that, although all the Fourier coefficients $k_j(n)=1-\frac{2j}{n}$ are contained in the bounded interval $[-1,1]$, the whole sequence converges to
\begin{equation}\label{Eq_Example_convergence}
\lim\limits_{n\rightarrow\infty}F_n(z,k)=e^{ikz},
\end{equation}
a plane wave with wave vector $|k|>1$; cf. \cite[Theorem 2.1]{CSSY19} for more details. Besides the convergence (\ref{Eq_Example_convergence}), the important feature of the functions $F_n$ is that also for finite $n$ they oscillate
with frequencies close to $k$ in certain intervals; the length of these intervals grow when $n$ increases. Nevertheless, outside this
interval one obtains an exponential growth of the amplitude, which conversely means that the amplitude inside the superoscillatory region is exponentially small.
Different types of functions, in the form of a square-integrable $sinc$ function, which are band-limited and in some intervals oscillate faster than its highest Fourier component, can be found in \cite{B16}.

Inspired by (\ref{Eq_Example_function}), we define the notion of superoscillations as follows:

\begin{defi}\label{defi_Superoscillation}
A sequence of functions of the form
\begin{equation}\label{Eq_Generalized_Fourier_series}
F_n(z)=\sum\limits_{j=0}^nC_j(n)e^{ik_j(n)z},\qquad z\in\mathbb{C},
\end{equation}
with coefficients $C_j(n)\in\mathbb{C}$ and $k_j(n)\in\mathbb{R}$, is said to be superoscillating, if there exists some $k\in\mathbb{R}$ such that
\begin{enumerate}
\item[{\rm (i)}] $\sup_{n\in\mathbb{N}_0,j\in\{0,\dots,n\}}|k_j(n)|<k$, and
\item[{\rm (ii)}] for some $B\geq 0$ one has $\lim\limits_{n\rightarrow\infty}\big\Vert(F_n-e^{ik\,\cdot})e^{-B|\,\cdot\,|}\big\Vert_\infty=0$.
\end{enumerate}
\end{defi}

Note, that (ii) is exactly the convergence in the space $A_1(\mathbb{C})$ introduced in Definition~\ref{defi_A1} below.
Due to the above description of the exponential growth of the amplitude outside the superoscillatory region, it is reasonable to use the exponential weight $e^{-B|\,\cdot\,|}$ as a damping factor in the uniform convergence.

\vspace{0.2cm}

The purpose of this paper is now to consider a superoscillating sequence $(F_n)_n$ as the initial datum of the time dependent Schr\"odinger equation (\ref{Eq_Schroedinger}) with
either a $\delta$-potential or a $\delta'$-potential and investigate the corresponding sequence of solutions $(\Psi_n)_n$. The main result of this paper is the following:

\begin{satz}\label{satz_Convergence_solutions}
Let $(F_n)_n$ be a superoscillatory sequence with limit function
\begin{equation}\label{Eq_Convergence_initial_datum}
F_n\overset{A_1}{\longrightarrow}e^{ik\,\cdot\,},
\end{equation}
for some $k\in\mathbb{R}$. Then the solutions $\Psi$ and $\Psi_n$ of \eqref{Eq_Schroedinger} with either $V=2\alpha\delta$ or $V=\frac{2}{\beta}\delta'$ for $\alpha,\beta\in\mathbb{R}\setminus\{0\}$,
and initial data $e^{ik\,\cdot\,}$ and $F_n$, respectively, satisfy
\begin{equation*}
\lim\limits_{n\rightarrow\infty}\Psi_n(t,x)=\Psi(t,x)
\end{equation*}
uniformly on every compact subset of $(0,\infty)\times\mathbb{R}$.
\end{satz}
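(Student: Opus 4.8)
The plan is to treat the Schrödinger evolution as a convolution operator acting on the entire function space $A_1(\mathbb{C})$ and to exploit the continuity of such operators together with the explicit solution for plane-wave initial data that the paper announces. Concretely, for each of the potentials $V=2\alpha\delta$ and $V=\frac{2}{\beta}\delta'$, one first needs the Green's function (time-dependent propagator) $G_V(t,x,y)$ of \eqref{Eq_Schroedinger}, so that the solution with initial datum $F$ is
\begin{equation*}
\Psi(t,x)=\int_{\mathbb{R}}G_V(t,x,y)F(y)\,dy.
\end{equation*}
For the free case this is the familiar Gaussian kernel, and for the $\delta$- and $\delta'$-point interactions the propagator is known explicitly (it is obtained by adding to the free kernel a correction term built from the resolvent of the point interaction; see \cite{AGHH05}). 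The key structural observation is that, because $\Psi_n$ is a finite linear combination $\sum_j C_j(n)e^{ik_j(n)x}$ pushed through a linear evolution, we have
\begin{equation*}
\Psi_n(t,x)=\sum_{j=0}^n C_j(n)\,\Psi_{k_j(n)}(t,x),
\end{equation*}
where $\Psi_\kappa(t,x)$ denotes the solution with the single plane wave $e^{i\kappa x}$ as initial datum. Thus everything reduces to controlling the map $\kappa\mapsto\Psi_\kappa(t,x)$ and its behaviour under the superoscillatory limit.

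The second step is to package the evolution as a convolution operator on $A_1(\mathbb{C})$ in the precise sense used in the paper. One writes, for fixed $(t,x)$,
\begin{equation*}
\Psi(t,x)=\big(\mathcal{U}(t,x)F\big),
\end{equation*}
where $\mathcal{U}(t,x)$ is a linear functional (or, after analytic continuation in $x$, a convolution-type operator) acting on entire functions of exponential type. The general existence result for convolution operators in $A_1(\mathbb{C})$ quoted earlier in the paper should guarantee that $\mathcal{U}(t,x)$ is continuous from $A_1(\mathbb{C})$ into $\mathbb{C}$, with bounds uniform for $(t,x)$ in a compact subset of $(0,\infty)\times\mathbb{R}$. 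The main point to verify is that the propagator correction terms coming from the $\delta$- and $\delta'$-interactions, when integrated against $e^{i\kappa y}$, produce functions of $\kappa$ (and of $(t,x)$) with the right growth — at most order $1$, minimal type — so that they define admissible symbols for an $A_1(\mathbb{C})$ convolution operator. This is exactly where the explicit plane-wave formula and its large-time asymptotics, announced in the abstract, do the work: they exhibit $\Psi_\kappa(t,x)$ as an entire function of $\kappa$ with controlled exponential type, and the dependence on $(t,x)$ is locally uniform.

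The third step is the soft conclusion: by Definition~\ref{defi_Superoscillation}(ii) the hypothesis $F_n\overset{A_1}{\longrightarrow}e^{ik\,\cdot}$ means $F_n\to e^{ik\,\cdot}$ in the topology of $A_1(\mathbb{C})$, and continuity of $\mathcal{U}(t,x)$ gives
\begin{equation*}
\Psi_n(t,x)=\mathcal{U}(t,x)F_n\;\longrightarrow\;\mathcal{U}(t,x)e^{ik\,\cdot}=\Psi(t,x).
\end{equation*}
To upgrade pointwise convergence to uniform convergence on compact subsets of $(0,\infty)\times\mathbb{R}$, one keeps track of the operator norm of $\mathcal{U}(t,x)$ as a function of $(t,x)$: since the Gaussian kernel and the point-interaction corrections depend continuously on $(t,x)$ (with no singularity away from $t=0$), the bound on $\|\mathcal{U}(t,x)\|$ is uniform on compacta, so the estimate $|\Psi_n(t,x)-\Psi(t,x)|\le\|\mathcal{U}(t,x)\|\cdot\|F_n-e^{ik\,\cdot}\|_{A_1}$ is uniform there.

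I expect the main obstacle to be the second step: establishing that the $\delta'$-propagator — and to a lesser extent the $\delta$-propagator without the auxiliary smallness assumption used in \cite{BCS19} — genuinely defines a bounded convolution operator on $A_1(\mathbb{C})$. The correction term involves a Laplace-type integral $\int_0^\infty e^{-s\cdot(\text{something})}(\cdots)\,ds$ coming from the resolvent of the point interaction, and one must show that after pairing with $e^{i\kappa y}$ this contributes only exponential type $0$ growth in $\kappa$ and stays integrable/summable uniformly in $(t,x)$ on compacta; the $\delta'$ case is more delicate because the interaction acts on the derivative, so the kernel correction carries extra factors that must be shown not to spoil the $A_1$-admissibility. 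Removing the strength restriction of \cite{BCS19} presumably amounts to handling the full range of the coupling constant $\alpha$ (resp. $\beta$) in this growth estimate rather than only a subrange, which is a matter of sharper bookkeeping on the Laplace integral rather than a new idea.
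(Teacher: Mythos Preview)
Your overall architecture is correct and matches the paper's: one represents the solution with plane-wave datum $e^{i\kappa x}$ as a function $\Psi_\kappa(t,x)$ of $\kappa$, shows it has an absolutely convergent power series in $\kappa$ with coefficients satisfying the hypotheses of Proposition~\ref{prop_Operator_representation}, obtains a continuous operator $U(t,x):A_1(\mathbb{C})\to A_1(\mathbb{C})$ with $\Psi_\kappa(t,x)=U(t,x)e^{i\kappa\xi}|_{\xi=0}$, extends to $\Psi_n=U(t,x)F_n|_{\xi=0}$ by linearity, and concludes via the bound \eqref{Eq_Operator_boundedness} with a constant depending continuously on $(t,x)$.

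Where you diverge from the paper is in the resolution of the obstacle you correctly flag. The paper does \emph{not} go through the Green's function $G_V(t,x,y)$ or any Laplace-type integral; in fact the integral $\int G_V(t,x,y)F(y)\,dy$ is not a priori meaningful for $F\in A_1(\mathbb{C})$, since such $F$ may grow exponentially. Instead the paper works directly with the explicit plane-wave solution of Theorem~\ref{satz_Plane_wave_solution}. The potential trouble is that the correction term $\varphi_\delta$ in \eqref{Eq_varphi_delta} carries the prefactor $\frac{\alpha}{2(\alpha+ik)}$, whose naive geometric expansion in $k$ converges only for $|k|<|\alpha|$ --- this is exactly the restriction in \cite{BCS19}. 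The paper's device for removing it is an elementary but decisive rewriting: one observes that $\varphi_\delta(t,x;k)$ is, up to the harmless factor $\frac{\alpha\sqrt{it}}{2}e^{-x^2/4it}$, precisely a \emph{difference quotient}
\[
\frac{\Lambda(z+a)-\Lambda(z)}{a},\qquad z=\tfrac{|x|}{2\sqrt{it}}-ik\sqrt{it},\quad a=(\alpha+ik)\sqrt{it},
\]
of the entire function $\Lambda$ (Lemma~\ref{lem_Series_expansion_of_the_difference_quotient}). This difference quotient is automatically entire in both variables, so the apparent pole at $ik=-\alpha$ is removable and one obtains an absolutely convergent double series in $k$, valid for all $k\in\mathbb{R}$, whose coefficients feed directly into Proposition~\ref{prop_Operator_representation}. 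The $\delta'$-case is handled by splitting $\varphi_{\delta'}$ into a piece of the same difference-quotient form (with $\beta$ in place of $\alpha$) plus a term with no denominator at all. So the ``sharper bookkeeping'' you anticipate is not on a Laplace integral but on the power-series coefficients of $\Lambda$, and the key algebraic idea is the difference-quotient reformulation rather than anything analytic about the resolvent.
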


Note, that the mathematical rigorous implementation of a singular $\delta$-potential or a $\delta'$-potentials in the Schr\"odinger equation (\ref{Eq_Schroedinger}) is via jump
conditions at the location $x=0$ of the interaction; cf. (\ref{Eq_Schroedinger_delta}) and (\ref{Eq_Schroedinger_deltaprime}).

\vspace{0.2cm}

The proof of Theorem \ref{satz_Convergence_solutions} is postponed to the end of Section \ref{sec_Proof_of_Theorem}. Before, in Section \ref{sec_Preliminary_results}, we
introduce the space $A_1(\mathbb{C})$ which is used in the convergence (\ref{Eq_Convergence_initial_datum}). We also construct continuous operators acting in this space, which will play a
crucial role in the proof of Theorem \ref{satz_Convergence_solutions}. Moreover, in Section \ref{sec_plane_wave} we explicitly calculate the solution with a plane wave $F(x)=e^{ikx}$ as initial condition.

\section{Convolution operators in $A_1(\mathbb{C})$}\label{sec_Preliminary_results}

In this section we recall the definition of the space $A_1(\mathbb{C})$ of entire functions with exponential growth,
already mentioned below Definition \ref{defi_Superoscillation}. For the analysis of superoscillations this space (or slight modifications of it) is a convenient choice;
cf. \cite{ACSST17_1,ASTY18,ACSS18}. Note also
that $A_1(\mathbb{C})$ is one particular space in the theory of analytically uniform spaces, see, e.g., \cite[Chapter~4]{ACSST17_1}.

\begin{defi}\label{defi_A1}
Let $\mathcal{H}(\mathbb{C})$ be the space of entire functions and
define
\begin{equation*}
A_1(\mathbb{C})\coloneqq\Set{F\in\mathcal{H}(\mathbb{C}) | \exists A,B\geq 0\text{ such that }|F(z)|\leq Ae^{B|z|}\text{ for all }z\in\mathbb{C}}.
\end{equation*}
A sequence $(F_n)_n\in A_1(\mathbb{C})$ is said to be $A_1$-convergent to $F\in A_1(\mathbb{C})$ if
\begin{equation}\label{Eq_A1_convergence}
\lim\limits_{n\rightarrow\infty}\big\Vert(F_n-F)e^{-B|\,\cdot\,|}\big\Vert_\infty=0\quad\text{for some }B\geq 0.
\end{equation}
This type of convergence will be denoted by $F_n\overset{A_1}{\longrightarrow}F$.
\end{defi}

The following lemma shows that also the derivatives of functions in  $A_1(\mathbb{C})$ are exponentially bounded.

\begin{lem}
Let $F\in A_1(\mathbb{C})$ admit the estimate
\begin{equation}\label{Eq_A1_estimate}
|F(z)|\leq Ae^{B|z|},\qquad z\in\mathbb{C},
\end{equation}
for some $A,B\geq 0$. Then the derivatives $F^{(n)}$, $n\in\mathbb{N}$, of $F$ admit the estimate
\begin{equation}\label{Eq_A1_estimate_derivative}
|F^{(n)}(z)|\leq A(eB)^ne^{B|z|},\qquad z\in\mathbb{C},
\end{equation}
and, in particular, $F^{(n)}\in A_1(\mathbb{C})$ for all $n\in\mathbb{N}$.
\end{lem}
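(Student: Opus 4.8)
The plan is to use Cauchy's integral formula for the derivatives together with the growth bound \eqref{Eq_A1_estimate}. Fix $z\in\mathbb{C}$ and $n\in\mathbb{N}$. For any radius $r>0$ we have
\[
F^{(n)}(z)=\frac{n!}{2\pi i}\int_{|w-z|=r}\frac{F(w)}{(w-z)^{n+1}}\,dw,
\]
and hence, parametrising the circle and using $|F(w)|\le Ae^{B|w|}\le Ae^{B|z|}e^{Br}$ for $|w-z|=r$,
\[
|F^{(n)}(z)|\le\frac{n!}{2\pi}\cdot\frac{1}{r^{n+1}}\cdot 2\pi r\cdot Ae^{B|z|}e^{Br}=\frac{n!}{r^{n}}\,Ae^{B|z|}e^{Br}.
\]
So it remains to choose $r$ to make the factor $n!\,r^{-n}e^{Br}$ as small as (conveniently) possible.

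The natural choice is $r=n/B$ (assuming $B>0$; the case $B=0$ forces $F$ constant, so all derivatives vanish and the estimate is trivial). With this choice $n!\,r^{-n}e^{Br}=n!\,(B/n)^n e^{n}=n!\,e^n n^{-n}B^n$. The elementary inequality $n!\le n^n e^{-n}\cdot e\cdot\sqrt{n}$ — or even the cruder $n!\le n^n$ is not quite enough, so I would invoke $n!\le e\, n^{n+1/2}e^{-n}$, a standard consequence of Stirling's formula, or more simply the bound $n!\le n^n$ combined with keeping track only of the exponential part. Actually the cleanest route: from $e^n=\sum_{k\ge 0}n^k/k!\ge n^n/n!$ we get directly $n!\ge n^n e^{-n}$, but we need an upper bound on $n!$. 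Using $n!\le n^n$ gives $n!\,e^n n^{-n}B^n\le e^n B^n=(eB)^n$, which is exactly \eqref{Eq_A1_estimate_derivative}. Hence $|F^{(n)}(z)|\le A(eB)^n e^{B|z|}$ as claimed.

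The only mild subtlety — and thus the main thing to be careful about — is the degenerate case $B=0$: then one cannot take $r=n/B$, but in that case $|F(z)|\le A$ everywhere, so $F$ is a bounded entire function, hence constant by Liouville, so $F^{(n)}\equiv 0$ for $n\ge 1$ and the bound $A(e\cdot 0)^n e^{0}=0$ holds trivially. One could alternatively handle all cases uniformly by replacing $B$ with $B+\varepsilon$, applying the argument, and letting $\varepsilon\downarrow 0$, which also sidesteps the issue of whether $B$ is attained. Either way, once the Cauchy estimate with the optimised radius is in hand, the conclusion $F^{(n)}\in A_1(\mathbb{C})$ is immediate since \eqref{Eq_A1_estimate_derivative} is precisely a bound of the form $A'e^{B|z|}$ with $A'=A(eB)^n$.
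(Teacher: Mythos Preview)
Your argument is correct and follows essentially the same route as the paper: Cauchy's integral formula, the choice $r=n/B$, and the elementary bound $n!\le n^n$ to conclude, with the $B=0$ case handled separately via Liouville. The only suggestion is to prune the intermediate detour through Stirling-type bounds and the lower estimate $n!\ge n^n e^{-n}$, since the single inequality $n!\le n^n$ is all that is needed and is what you ultimately use.
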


\begin{proof}
If $B=0$ the statement holds, since in that case the entire function $F$ is bounded and hence constant, so that all its derivatives vanish identically. In the following let
$B\neq 0$. By Cauchy's integral formula we have
\begin{equation*}
F^{(n)}(z)=\frac{n!}{2\pi i}\int_{|\xi-z|=r}\frac{F(\xi)}{(\xi-z)^{n+1}}d\xi=\frac{n!}{2\pi r^n}\int_0^{2\pi}\frac{F(z+re^{i\varphi})}{e^{in\varphi}}d\varphi,\qquad z\in\mathbb{C},
\end{equation*}
for the $n$-th derivative, where $r>0$ is not yet specified. The exponential boundedness (\ref{Eq_A1_estimate}) of the integrand gives the estimate
\begin{equation*}
\left|F^{(n)}(z)\right|\leq\frac{An!}{2\pi r^n}\int_0^{2\pi}e^{B|z+re^{i\varphi}|}d\varphi\leq\frac{An!}{r^n}e^{B(|z|+r)},\qquad z\in\mathbb{C}.
\end{equation*}
It is easy to see that the right hand side can be minimized choosing $r=\frac{n}{B}$, which gives
\begin{equation*}
\left|F^{(n)}(z)\right|\leq\frac{AB^nn!}{n^n}e^{B|z|+n},\qquad z\in\mathbb{C}.
\end{equation*}
Using $\frac{n!}{n^n}\leq 1$, this gives the estimate (\ref{Eq_A1_estimate_derivative}).
\end{proof}

After this preparatory lemma, we consider an operator that can be used to reconstruct a given function from plane waves.
Such operators will play the crucial role of time evolution operators in Section \ref{sec_Proof_of_Theorem}, acting on the initial datum of the Schr\"odinger equation and pointwise giving its solution.

\begin{prop}\label{prop_Operator_representation}
Let $\Psi:\mathbb{R}\rightarrow\mathbb{C}$ be a function, which can be written as the absolute convergent power series
\begin{equation}\label{Eq_Psi_power_series}
\Psi(k)=\sum\limits_{m=0}^\infty c_mk^m,\qquad k\in\mathbb{R},
\end{equation}
with coefficients $(c_m)_m\in\mathbb{C}$. Then the operator $U:A_1(\mathbb{C})\rightarrow A_1(\mathbb{C})$, defined by
\begin{equation}\label{Eq_Operator}
UF(\xi)\coloneqq\sum\limits_{m=0}^\infty(-i)^mc_m\frac{d^m}{d\xi^m}F(\xi),\qquad F\in A_1(\mathbb{C}),\,\xi\in\mathbb{C},
\end{equation}
is continuous in $A_1(\mathbb{C})$ and satisfies
\begin{equation}\label{Eq_Psi_Operator_representation}
\Psi(k)=Ue^{ik\xi}\big|_{\xi=0},\qquad k\in\mathbb{R}.
\end{equation}
Moreover, for $F\in A_1(\mathbb{C})$ such that $|F(\xi)|\leq Ae^{B|\xi|}$ and $S_B\coloneqq\sum\limits_{m=0}^\infty|c_m|B^m$  the estimate
\begin{equation}\label{Eq_Operator_boundedness}
|UF(\xi)|\leq AS_{eB}e^{eB|\xi|},\qquad\xi\in\mathbb{C},
\end{equation}
is valid.
\end{prop}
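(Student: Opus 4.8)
The plan is to verify the three claims---well-definedness plus continuity of $U$, the reconstruction identity \eqref{Eq_Psi_Operator_representation}, and the quantitative bound \eqref{Eq_Operator_boundedness}---by first establishing \eqref{Eq_Operator_boundedness} as the workhorse estimate, since it simultaneously shows that the series in \eqref{Eq_Operator} converges (uniformly on compact sets, hence to an entire function by Weierstrass) and that $U$ maps $A_1(\mathbb{C})$ into itself. Concretely, I would fix $F\in A_1(\mathbb{C})$ with $|F(z)|\le Ae^{B|z|}$ and apply the preceding lemma to bound the partial sums:
\begin{equation*}
\sum_{m=0}^N|c_m|\,|F^{(m)}(\xi)|\le\sum_{m=0}^N|c_m|A(eB)^me^{B|\xi|}\le AS_{eB}e^{eB|\xi|},
\end{equation*}
where I use $e^{B|\xi|}\le e^{eB|\xi|}$ and recognize $\sum_m|c_m|(eB)^m=S_{eB}$, which is finite because the power series \eqref{Eq_Psi_power_series} is assumed absolutely convergent on all of $\mathbb{R}$ (so its coefficients satisfy $|c_m|^{1/m}\to 0$, giving infinite radius of convergence). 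Letting $N\to\infty$ gives both the absolute convergence of the defining series and the estimate \eqref{Eq_Operator_boundedness}; the limit function $UF$ is entire as a locally uniform limit of entire functions, and the displayed bound shows $UF\in A_1(\mathbb{C})$.

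Next I would address the reconstruction identity. Plugging $F(\xi)=e^{ik\xi}$ into \eqref{Eq_Operator} and using $\frac{d^m}{d\xi^m}e^{ik\xi}=(ik)^me^{ik\xi}$, one gets
\begin{equation*}
Ue^{ik\xi}=\sum_{m=0}^\infty(-i)^mc_m(ik)^me^{ik\xi}=\sum_{m=0}^\infty c_mk^me^{ik\xi}=\Psi(k)e^{ik\xi},
\end{equation*}
where the interchange of $U$ with the sum is justified by the just-proved absolute convergence (with $B=|k|$, $A=1$). Evaluating at $\xi=0$ yields \eqref{Eq_Psi_Operator_representation}. For continuity of $U$: if $F_n\overset{A_1}{\longrightarrow}F$, i.e.\ $G_n\coloneqq F_n-F$ satisfies $\|G_ne^{-B|\cdot|}\|_\infty\to 0$ for some fixed $B\ge 0$, then writing $\varepsilon_n\coloneqq\|G_ne^{-B|\cdot|}\|_\infty$ we have $|G_n(z)|\le\varepsilon_ne^{B|z|}$, so by linearity and the bound \eqref{Eq_Operator_boundedness} applied to $G_n$ (with constant $\varepsilon_n$ in place of $A$),
\begin{equation*}
|UF_n(\xi)-UF(\xi)|=|UG_n(\xi)|\le\varepsilon_nS_{eB}e^{eB|\xi|},
\end{equation*}
hence $\|(UF_n-UF)e^{-eB|\cdot|}\|_\infty\le\varepsilon_nS_{eB}\to 0$, which is precisely $A_1$-convergence $UF_n\overset{A_1}{\longrightarrow}UF$ (with the common exponential rate $eB$).

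I do not expect any genuine obstacle here; the proof is essentially bookkeeping once the derivative estimate from the previous lemma is in hand. The one point requiring a little care is making sure that every interchange of summation with evaluation/differentiation is legitimate---but this is handled uniformly by the locally uniform absolute convergence coming from \eqref{Eq_Operator_boundedness}, since a locally uniformly convergent series of holomorphic functions may be differentiated term by term and reordered freely. A secondary cosmetic point is the appearance of the factor $e$ (from $r=n/B$ in the Cauchy estimate) which forces the growth rate to worsen from $B$ to $eB$ and the sum $S_B$ to $S_{eB}$; this is harmless because $S_{eB}<\infty$ for every $B$ by the infinite radius of convergence of $\Psi$, and it is the reason the statement is phrased with $eB$ throughout rather than $B$.
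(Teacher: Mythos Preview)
Your argument is correct and in the same spirit as the paper's, but the technical route to the conclusion that $UF$ is entire differs. The paper inserts the Taylor expansion $F(\xi)=\sum_n\frac{F^{(n)}(0)}{n!}\xi^n$ into the defining series, interchanges the two sums via the derivative estimate at $\xi=0$, and thereby exhibits $UF$ explicitly as a convergent power series in $\xi$; the bound \eqref{Eq_Operator_boundedness} then falls out of the estimate on the rearranged coefficients. You instead apply the derivative lemma directly at the point $\xi$, obtain a Weierstrass $M$-test bound $\sum_m|c_m|A(eB)^m e^{B|\xi|}$ valid uniformly on compact sets, and conclude holomorphy from the locally uniform convergence of the partial sums. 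Your route is a bit more economical (no double-sum interchange) and in fact yields the slightly sharper intermediate estimate $|UF(\xi)|\le AS_{eB}e^{B|\xi|}$ before you relax it to match the stated bound with $e^{eB|\xi|}$. You also spell out the continuity argument from \eqref{Eq_Operator_boundedness} and the definition of $A_1$-convergence, which the paper leaves implicit. Both approaches hinge on the same ingredients---the Cauchy-estimate lemma for derivatives and the infinite radius of convergence of $\Psi$---so the difference is one of packaging rather than substance.
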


\begin{proof}
In order to see that for $F\in A_1(\mathbb{C})$ the image $UF$ is again an element in $A_1(\mathbb{C})$, we have to show that $UF$ is entire and exponentially bounded. Since $F\in A_1(\mathbb{C})$ is entire
we can use the power series representation
$$
F(\xi)=\sum_{n=0}^\infty\frac{F^{(n)}(0)}{n!}\xi^n,\qquad\xi\in\mathbb{C},
$$
to write (\ref{Eq_Operator}) as the power series
\[
\begin{split}
UF(\xi)&=\sum\limits_{m=0}^\infty(-i)^mc_m\sum\limits_{n=m}^\infty\frac{F^{(n)}(0)}{(n-m)!}\,\xi^{n-m}
\\
&
=\sum\limits_{n=0}^\infty\sum\limits_{m=0}^\infty(-i)^mc_m\frac{F^{(n+m)}(0)}{n!}\,\xi^n,\qquad\xi\in\mathbb{C}.
\end{split}
\]
In the last step we interchanged the order of summation, which is allowed since (\ref{Eq_A1_estimate_derivative}) ensures the estimate
\begin{equation}\label{Eq_Operator_existence_3}
\sum\limits_{m=0}^\infty|c_m|\frac{|F^{(n+m)}(0)|}{n!}\leq\sum\limits_{m=0}^\infty|c_m|\frac{A(eB)^{n+m}}{n!}=AS_{eB}\frac{(eB)^n}{n!},
\end{equation}
and hence the absolute convergence of the double sum. This shows that $UF$ can be represented as an everywhere convergent power series and hence is entire.
Moreover, the estimate (\ref{Eq_Operator_existence_3}) gives the exponential boundedness
\begin{equation*}
|UF(\xi)|\leq AS_{eB}\sum\limits_{n=0}^\infty\frac{(eB)^n}{n!}|\xi|^n=AS_{eB}e^{eB|\xi|},\qquad\xi\in\mathbb{C},
\end{equation*}
so that $UF\in A_1(\mathbb{C})$ and the estimate (\ref{Eq_Operator_boundedness}) is satisfied. Finally, (\ref{Eq_Psi_Operator_representation}) follows from
\begin{equation*}
Ue^{ik\xi}\big|_{\xi=0}=\sum\limits_{m=0}^\infty(-i)^mc_m\frac{d^m}{d\xi^m}e^{ik\xi}\Big|_{\xi=0}=\sum\limits_{m=0}^\infty(-i)^mc_m(ik)^m=\Psi(k).
\end{equation*}
\end{proof}

\section{Plane wave under a $\delta$- or $\delta'$-potential}\label{sec_plane_wave}

In quantum mechanics a particle or wave interacting with a $\delta$-potential or $\delta'$-potential is described by the Schr\"odinger equation (\ref{Eq_Schroedinger}),
formally putting $V=2\alpha\delta$ or $V=\frac{2}{\beta}\delta'$, where $\alpha,\beta\in\mathbb{R}\setminus\{0\}$ model the strength of the respective peaks.
The rigorous mathematical description of such singular potentials is via jump conditions at the location $x=0$ of the potential.
More precisely, for the $\delta$-potential we have to consider the system
\begin{subequations}\label{Eq_Schroedinger_delta}
\begin{align}
i\frac{\partial}{\partial t}\Psi_\delta(t,x)&=-\frac{\partial^2}{\partial x^2}\Psi_\delta(t,x),\qquad t>0,\,x\in\mathbb{R}\setminus\{0\}, \label{Eq_Schroedinger_delta_1} \\
\Psi_\delta(t,0^+)&=\Psi_\delta(t,0^-),\hspace{1.66cm} t>0, \label{Eq_Schroedinger_delta_2} \\
\frac{\partial}{\partial x}\Psi_\delta(t,0^+)-\frac{\partial}{\partial x}\Psi_\delta(t,0^-)&=2\alpha\,\Psi_\delta(t,0),\hspace{1.37cm} t>0, \label{Eq_Schroedinger_delta_3} \\
\Psi_\delta(0^+,x)&=F(x),\hspace{2.37cm} x\in\mathbb{R}\setminus\{0\}, \label{Eq_Schroedinger_delta_4}
\end{align}
\end{subequations}
and for the $\delta'$-potential we have to consider the system
\begin{subequations}\label{Eq_Schroedinger_deltaprime}
\begin{align}
i\frac{\partial}{\partial t}\Psi_{\delta'}(t,x)&=-\frac{\partial^2}{\partial x^2}\Psi_{\delta'}(t,x),\qquad t>0,\,x\in\mathbb{R}\setminus\{0\}, \label{Eq_Schroedinger_deltaprime_1} \\
\frac{\partial}{\partial x}\Psi_{\delta'}(t,0^+)&=\frac{\partial}{\partial x}\Psi_{\delta'}(t,0^-),\hspace{1.1cm} t>0, \label{Eq_Schroedinger_deltaprime_2} \\
\Psi_{\delta'}(t,0^+)-\Psi_{\delta'}(t,0^-)&=\frac{2}{\beta}\frac{\partial}{\partial x}\Psi_{\delta'}(t,0),\hspace{1cm} t>0, \label{Eq_Schroedinger_deltaprime_3} \\
\Psi_{\delta'}(0^+,x)&=F(x),\hspace{2.47cm} x\in\mathbb{R}\setminus\{0\}, \label{Eq_Schroedinger_deltaprime_4}
\end{align}
\end{subequations}
where $F$ is an appropriate initial condition. In the case that $F$ is a plane wave we find an explicit representation of the solutions $\Psi_\delta$ and $\Psi_{\delta'}$
in Theorem \ref{satz_Plane_wave_solution} below. To write down these solutions efficiently we will use a certain modified error function $\Lambda$ and its properties.
For the convenience of the reader we recall the following lemma from \cite{BCS19}.

\begin{lem}\label{lem_Lambda}
For the function
\begin{equation*}
\Lambda(z)\coloneqq e^{z^2}\left(1-\frac{2}{\sqrt{\pi}}\int_0^ze^{-\xi^2}d\xi\right),\qquad z\in\mathbb{C},
\end{equation*}
the following statements hold:

\begin{enumerate}
\item[{\rm (i)}] $\Lambda(-z)=2e^{z^2}-\Lambda(z)$, $z\in\mathbb{C}$;

\item[{\rm (ii)}] $\Lambda'(z)=2z\Lambda(z)-\frac{2}{\sqrt{\pi}}$, $z\in\mathbb{C}$;

\item[{\rm (iii)}] for $|z|\rightarrow\infty$ one has
\begin{equation}\label{Eq_Lambda_asymptotics}
\Lambda(z)=\left\{\begin{array}{ll} \frac{1}{\sqrt{\pi}z}+\mathcal{O}\big(\frac{1}{|z|^2}\big),
& \text{if }\Re(z)\geq 0, \\ 2e^{z^2}+\frac{1}{\sqrt{\pi}z}+\mathcal{O}\big(\frac{1}{|z|^2}\big), & \text{if }\Re(z)\leq 0; \end{array}\right.
\end{equation}

\item[{\rm (iv)}] $\Lambda$ admits the power series representation
\begin{equation}\label{Eq_Lambda_power_series}
\Lambda(z)=\sum\limits_{n=0}^\infty\frac{(-1)^n}{\Gamma(\frac{n}{2}+1)}z^n,\qquad z\in\mathbb{C}.
\end{equation}
\end{enumerate}
\end{lem}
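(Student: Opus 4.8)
The plan is to recognise $\Lambda$ as a disguised complementary error function and to read off (i)--(iv) from elementary properties of $\erf$ together with the first order differential equation in (ii). Writing $\erf(z)=\frac{2}{\sqrt\pi}\int_0^ze^{-\xi^2}\,d\xi$ and $\erfc(z)=1-\erf(z)$, the definition of $\Lambda$ reads $\Lambda(z)=e^{z^2}\erfc(z)$; since $e^{z^2}$ and $\erf$ are entire, so is $\Lambda$. Part (i) is then immediate from the oddness $\erf(-z)=-\erf(z)$, which gives $\Lambda(-z)=e^{z^2}(1+\erf(z))=2e^{z^2}-e^{z^2}(1-\erf(z))=2e^{z^2}-\Lambda(z)$. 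Part (ii) is the product rule, using $(e^{z^2})'=2ze^{z^2}$ and $\erf'(z)=\frac{2}{\sqrt\pi}e^{-z^2}$, so that $\Lambda'(z)=2ze^{z^2}(1-\erf(z))-e^{z^2}\frac{2}{\sqrt\pi}e^{-z^2}=2z\Lambda(z)-\frac{2}{\sqrt\pi}$.

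For the asymptotics (iii) I would first treat the half-plane $\Re(z)\ge0$. Starting from the representation $\erfc(z)=\frac{2}{\sqrt\pi}\int_z^\infty e^{-t^2}\,dt$ with the integration path taken along the horizontal ray $t=z+s$, $s\ge0$ (legitimate since $e^{-t^2}$ is entire and $\Re((z+s)^2)\to+\infty$), the substitution $t=z+s$ yields $\Lambda(z)=\frac{2}{\sqrt\pi}\int_0^\infty e^{-2zs-s^2}\,ds$. Two integrations by parts in $s$ (integrating the factor $e^{-2zs}$, differentiating the Gaussian) give $\int_0^\infty e^{-2zs-s^2}\,ds=\frac{1}{2z}-\frac{1}{2z^2}R(z)$ with $R(z)=\int_0^\infty(1-2s^2)e^{-s^2}e^{-2zs}\,ds$, hence $|R(z)|\le\int_0^\infty|1-2s^2|e^{-s^2}\,ds<\infty$ whenever $\Re(z)\ge0$; thus $\Lambda(z)=\frac{1}{\sqrt\pi z}+\mathcal{O}(|z|^{-2})$ uniformly on $\{\Re(z)\ge0,\ |z|\ge1\}$. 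The case $\Re(z)\le0$ then follows from (i): since $\Re(-z)\ge0$ we obtain $\Lambda(-z)=-\frac{1}{\sqrt\pi z}+\mathcal{O}(|z|^{-2})$, and therefore $\Lambda(z)=2e^{z^2}-\Lambda(-z)=2e^{z^2}+\frac{1}{\sqrt\pi z}+\mathcal{O}(|z|^{-2})$.

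For the power series (iv), since $\Lambda$ is entire I would write $\Lambda(z)=\sum_{n\ge0}a_nz^n$. Then $\Lambda(0)=1$ forces $a_0=1$, and (ii) evaluated at $z=0$ gives $a_1=\Lambda'(0)=-\frac{2}{\sqrt\pi}$. Inserting the series into the identity of (ii) and comparing coefficients of $z^n$ produces the two-step recursion $(n+1)a_{n+1}=2a_{n-1}$ for $n\ge1$. It then remains to verify that $a_n=\frac{(-1)^n}{\Gamma(n/2+1)}$ solves it: with $\Gamma(1)=1$ and $\Gamma(3/2)=\frac{\sqrt\pi}{2}$ the values $a_0,a_1$ match, and since $(-1)^{n+1}=(-1)^{n-1}$ and $\Gamma(\tfrac{n+3}{2})=\tfrac{n+1}{2}\Gamma(\tfrac{n+1}{2})$ one has $\frac{a_{n+1}}{a_{n-1}}=\frac{\Gamma((n+1)/2)}{\Gamma((n+3)/2)}=\frac{2}{n+1}$, which is exactly the recursion; an induction on the parity of $n$ finishes the argument. (Alternatively one could multiply the Taylor expansions of $e^{z^2}$ and $1-\erf(z)$ and simplify the resulting sums via the Legendre duplication formula, but the recursion is shorter.)

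The only genuinely delicate step is (iii): one must ensure that the expansion, and in particular the bound on the remainder, is uniform up to the imaginary axis $\Re(z)=0$, rather than only on sectors $|\Arg(z)|\le\frac\pi2-\delta$. The horizontal-ray contour together with the explicit integration-by-parts estimate handles this, because all the integrals that occur are dominated by $\int_0^\infty(1+s^2)e^{-s^2}\,ds$ as soon as $\Re(z)\ge0$; everything else in the lemma is routine. (We note that Lemma~\ref{lem_Lambda} is quoted from \cite{BCS19}, where a proof along these lines is given.)
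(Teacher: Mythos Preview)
Your proof is correct in all four parts. Note, however, that the paper does not actually supply a proof of this lemma: it is stated with the preamble ``we recall the following lemma from \cite{BCS19}'' and no argument is given in the present paper. You have in fact recognised this yourself in your closing parenthetical remark. So there is no ``paper's own proof'' to compare against; your write-up simply fills in what the authors chose to outsource to the earlier reference. The arguments you give---oddness of $\erf$ for (i), the product rule for (ii), the horizontal-ray integral representation with one further integration by parts for (iii), and the first-order ODE recursion for (iv)---are the natural ones and are essentially what one finds in \cite{BCS19}. Your care about uniformity of the $\mathcal O(|z|^{-2})$ bound up to the imaginary axis is well placed, and the explicit domination $|R(z)|\le\int_0^\infty|1-2s^2|e^{-s^2}\,ds$ for $\Re(z)\ge0$ handles it cleanly.
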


The next theorem is the main result in this section.

\begin{satz}\label{satz_Plane_wave_solution}
The Schr\"odinger equations \eqref{Eq_Schroedinger_delta} and \eqref{Eq_Schroedinger_deltaprime} with initial datum $F(x)=e^{ikx}$, $k\in\mathbb{R}$, admit the explicit solutions
\begin{subequations}\label{Eq_delta_deltaprime_solutions}
\begin{align}
\Psi_\delta(t,x;k)&=\Psi_\text{\rm free}(t,x;k)+\varphi_\delta(t,x;k)+\varphi_\delta(t,-x;-k), \label{Eq_delta_solution} \\
\Psi_{\delta'}(t,x;k)&=\Psi_\text{\rm free}(t,x;k)+\varphi_{\delta'}(t,x;k)+\varphi_{\delta'}(t,-x;-k), \label{Eq_deltaprime_solution}
\end{align}
\end{subequations}
where $\Psi_\text{\rm free}(t,x;k)=e^{ikx-ik^2t}$ is the solution of the unperturbed system and
\begin{subequations}
\begin{align}
\varphi_\delta(t,x;k)&=\frac{\alpha}{2(\alpha+ik)}e^{-\frac{x^2}{4it}}
\left(\Lambda\left(\frac{|x|}{2\sqrt{it}}+\alpha\sqrt{it}\right)-\Lambda\left(\frac{|x|}{2\sqrt{it}}-ik\sqrt{it}\right)\right), \label{Eq_varphi_delta} \\
\varphi_{\delta'}(t,x;k)&=\frac{\sgn(x)}{2(\beta+ik)}e^{-\frac{x^2}{4it}}\left(
\beta\Lambda\left(\frac{|x|}{2\sqrt{it}}+\beta\sqrt{it}\right)+ik\Lambda\left(\frac{|x|}{2\sqrt{it}}-ik\sqrt{it}\right)\right). \label{Eq_varphi_deltaprime}
\end{align}
\end{subequations}
Moreover, for $t\rightarrow\infty$ the wave functions \eqref{Eq_delta_deltaprime_solutions} satisfy
\begin{subequations}\label{Eq_asymptotics}
\begin{align}
\Psi_\delta(t,x;k)&=e^{-ik^2t}\left(e^{ikx}-\frac{\alpha}{\alpha-i|k|}e^{i|kx|}\right)+1_{\mathbb{R}^-}(\alpha)\frac{2\alpha^2}{\alpha^2+k^2}e^{\alpha|x|+i\alpha^2t}+\mathcal{O}\left(\frac{1}{t}\right), \label{Eq_delta_asymptotics} \\
\Psi_{\delta'}(t,x;k)&=e^{-ik^2t}\left(e^{ikx}+\frac{ik\sgn(x)}{\beta-i|k|}e^{i|kx|}\right)-1_{\mathbb{R}^-}(\beta)\frac{2i\beta k\sgn(x)}{\beta^2+k^2}e^{\beta|x|+i\beta^2t}+\mathcal{O}\left(\frac{1}{t}\right), \label{Eq_deltaprime_asymptotics}
\end{align}
\end{subequations}
where $1_{\mathbb{R}^-}$
denotes the characteristic function of the negative half line.
\end{satz}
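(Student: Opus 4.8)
The plan is to verify directly that the functions on the right-hand sides of \eqref{Eq_delta_deltaprime_solutions} solve the boundary value problems \eqref{Eq_Schroedinger_delta} and \eqref{Eq_Schroedinger_deltaprime}, and then to extract the large time behaviour \eqref{Eq_asymptotics} from the asymptotic expansion \eqref{Eq_Lambda_asymptotics} of $\Lambda$. The formulas could instead be \emph{derived} rather than verified, by applying a Laplace transform in $t$, solving the resulting inhomogeneous second-order ODE in $x$ subject to the jump conditions at $x=0$, and inverting the transform, the $\Lambda$-functions appearing as inverse Laplace transforms of kernels of the form $\frac{e^{-b\sqrt\lambda}}{\sqrt\lambda+a}$; since the theorem only asserts the existence of solutions, I do not address uniqueness.

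The computational core of the first part is the building block that, for every $c\in\mathbb{C}$, the function $u_c(t,y)\coloneqq e^{-y^2/(4it)}\Lambda\big(\tfrac{y}{2\sqrt{it}}+c\sqrt{it}\big)$ solves the free Schr\"odinger equation $i\partial_tu_c=-\partial_y^2u_c$ on $(0,\infty)\times\mathbb{R}$. This is a short computation once one rewrites $u_c(t,y)=e^{cy+ic^2t}\big(1-\tfrac{2}{\sqrt\pi}\int_0^{w}e^{-\xi^2}\,d\xi\big)$ with $w=\tfrac{y}{2\sqrt{it}}+c\sqrt{it}$ and uses $\Lambda'(z)=2z\Lambda(z)-\tfrac{2}{\sqrt\pi}$ from Lemma~\ref{lem_Lambda}(ii). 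Since $e^{-x^2/(4it)}\Lambda\big(\tfrac{|x|}{2\sqrt{it}}+c\sqrt{it}\big)=u_c(t,|x|)$, each of the functions $\varphi_\delta(t,\pm x;\pm k)$ and $\varphi_{\delta'}(t,\pm x;\pm k)$ is a linear combination of such blocks and hence solves the free equation for $x\neq0$; together with $\Psi_\text{free}(t,x;k)=e^{ikx-ik^2t}$, which manifestly does, this gives \eqref{Eq_Schroedinger_delta_1}, resp.\ \eqref{Eq_Schroedinger_deltaprime_1}. For the initial value I let $t\to0^+$ with $x\neq0$ fixed: the argument $\tfrac{|x|}{2\sqrt{it}}+c\sqrt{it}$ then tends to infinity with positive real part, so \eqref{Eq_Lambda_asymptotics} makes every $\Lambda$-term $\mathcal{O}(\sqrt t)$ while $\big|e^{-x^2/(4it)}\big|=1$; hence $\varphi_\delta,\varphi_{\delta'}\to0$ and $\Psi(0^+,x)=\Psi_\text{free}(0,x;k)=e^{ikx}$. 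It remains to check the jump conditions. Evaluating the formulas at $x=0^\pm$, where $\tfrac{|x|}{2\sqrt{it}}\to0$ and $\partial_xe^{-x^2/(4it)}$ vanishes, continuity of $\Psi_\delta$ and of $\partial_x\Psi_{\delta'}$ is immediate, since the $|x|$- and $\sgn(x)$-structure makes the two one-sided limits coincide; the remaining relation — \eqref{Eq_Schroedinger_delta_3} in the $\delta$-case, \eqref{Eq_Schroedinger_deltaprime_3} in the $\delta'$-case — becomes, after differentiating and using Lemma~\ref{lem_Lambda}(ii) to replace $\Lambda'$ by $\Lambda$, an algebraic identity among the values $\Lambda(\alpha\sqrt{it})$ and $\Lambda(\pm ik\sqrt{it})$ (resp.\ with $\beta$) and the constant $e^{-ik^2t}$; it closes via the reflection formula $\Lambda(z)+\Lambda(-z)=2e^{z^2}$ of Lemma~\ref{lem_Lambda}(i), which supplies exactly $\Lambda(ik\sqrt{it})+\Lambda(-ik\sqrt{it})=2e^{-ik^2t}$ to cancel the contribution of the free part. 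The normalisations $\tfrac{\alpha}{2(\alpha+ik)}$ and $\tfrac{1}{2(\beta+ik)}$ are precisely what makes this identity hold.

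For \eqref{Eq_asymptotics} I fix $x\neq0$, write $\sqrt{it}=\sqrt t\,e^{i\pi/4}$, and insert \eqref{Eq_Lambda_asymptotics} into \eqref{Eq_varphi_delta}--\eqref{Eq_varphi_deltaprime}. Combining the two mirror terms leaves three $\Lambda$-contributions, with arguments $\tfrac{|x|}{2\sqrt{it}}+\alpha\sqrt{it}$ (resp.\ $+\beta\sqrt{it}$) and $\tfrac{|x|}{2\sqrt{it}}\mp ik\sqrt{it}$; which branch of \eqref{Eq_Lambda_asymptotics} applies is decided by the signs of $\alpha$ (resp.\ $\beta$) and of $k$, since $\Re(a\sqrt{it})$ has the sign of $\Re(a\,e^{i\pi/4})$. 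Whenever the relevant argument has negative real part, the $2e^{z^2}$ term of \eqref{Eq_Lambda_asymptotics} combines with the prefactor $e^{-x^2/(4it)}$ into an \emph{exact} exponential: the argument containing $+i|k|\sqrt{it}$ produces the reflected wave $e^{i|kx|-ik^2t}$, and the argument $\tfrac{|x|}{2\sqrt{it}}+\alpha\sqrt{it}$ (resp.\ with $\beta$) in the case of a \emph{negative} coupling produces the bound-state term $e^{\alpha|x|+i\alpha^2t}$ (resp.\ $e^{\beta|x|+i\beta^2t}$), which is the origin of the factors $1_{\mathbb{R}^-}(\alpha)$, $1_{\mathbb{R}^-}(\beta)$. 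Collecting the coefficients with the help of $\tfrac{1}{2(\alpha+ik)}+\tfrac{1}{2(\alpha-ik)}=\tfrac{\alpha}{\alpha^2+k^2}$ and its $\delta'$-analogue reproduces exactly the constants in \eqref{Eq_delta_asymptotics}--\eqref{Eq_deltaprime_asymptotics}.

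The genuinely delicate point — and what I expect to be the main obstacle — is the \emph{order of the error}. The leading $\tfrac{1}{\sqrt\pi z}$-parts of the three $\Lambda$-contributions are individually of size $t^{-1/2}$, hence larger than the claimed remainder $\mathcal{O}(1/t)$, so they must cancel after the summation $\varphi_\delta(t,x;k)+\varphi_\delta(t,-x;-k)$ (and likewise for $\varphi_{\delta'}$). For the $\delta$-potential the coefficient of $\tfrac{1}{\sqrt\pi\sqrt{it}}$ comes out as $\alpha\big(\tfrac{1}{\alpha^2+k^2}+\tfrac{i}{2k}\big(\tfrac{1}{\alpha-ik}-\tfrac{1}{\alpha+ik}\big)\big)$, which vanishes by a partial-fraction identity, and the $\delta'$-case is analogous; only after this cancellation does the $\mathcal{O}(1/|z|^2)=\mathcal{O}(1/t)$ remainder of \eqref{Eq_Lambda_asymptotics}, together with $e^{-x^2/(4it)}=1+\mathcal{O}(1/t)$, control the error. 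Carrying out this cancellation cleanly and keeping track of the sign cases in $\alpha,\beta,k$ so that the correct branch of \eqref{Eq_Lambda_asymptotics} is used uniformly for all large $t$ is the real work of the second part; the jump-condition verification, though longer to write out, is entirely mechanical given the properties of $\Lambda$ in Lemma~\ref{lem_Lambda}.
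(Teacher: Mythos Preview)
Your proposal is correct and follows essentially the same route as the paper: verify that the building block $e^{-x^2/(4it)}\Lambda\bigl(\tfrac{|x|}{2\sqrt{it}}+\omega\sqrt{it}\bigr)$ solves the free equation via Lemma~\ref{lem_Lambda}(ii), check the jump conditions at $x=0$ using (i)--(ii), obtain the initial condition from (iii), and read off the asymptotics from (iii). One simplification worth noting: for the $\delta'$-case (which is the one the paper actually writes out) the $t^{-1/2}$ cancellation you flag as delicate in fact occurs already within a single $\varphi_{\delta'}(t,x;k)$, since the two $1/(\sqrt\pi z)$ contributions combine exactly to $\tfrac{\sgn(x)\,|x|\sqrt{it}}{\sqrt\pi(|x|+2\beta it)(|x|+2kt)}=\mathcal{O}(t^{-3/2})$, so no cross-term bookkeeping between $\varphi_{\delta'}(t,x;k)$ and $\varphi_{\delta'}(t,-x;-k)$ is needed there; your cross-mirror cancellation argument is, however, the right one for the $\delta$-case.
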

\begin{bem}{\rm
Note that the second terms on the right hand sides in the asymptotics (\ref{Eq_asymptotics}) only appear for the attractive cases $\alpha,\beta<0$.
These terms are connected to the negative bound states which are only present in the attractive cases.
More precisely, these terms can be seen as the time evolution of the eigenfunctions $e^{\alpha|x|}$ and $\sgn(x)e^{\beta|x|}$ corresponding to the eigenvalues
$-\alpha^2$ and $-\beta^2$, respectively.
}\end{bem}

\begin{proof}[Proof of Theorem \ref{satz_Plane_wave_solution}]
Since the strategy of the proof is the same for $\Psi_\delta$ and $\Psi_{\delta'}$, and since the $\delta$-potential is already treated in \cite{BCS19},
we restrict our considerations to the $\delta'$-case. The verification of (\ref{Eq_Schroedinger_deltaprime}) and (\ref{Eq_deltaprime_asymptotics}) is splitted into four steps.

\vspace{0.2cm}\noindent
\textit{Step 1.} We check that (\ref{Eq_deltaprime_solution}) satisfies the differential equation (\ref{Eq_Schroedinger_deltaprime_1}).
Since this is obvious for the free part $\Psi_\text{\rm free}$ it suffices to check this for (\ref{Eq_varphi_deltaprime}). In fact, it suffices to verify that
\begin{equation}\label{Eq_phi}
\phi(t,x;\omega)\coloneqq e^{-\frac{x^2}{4it}}\Lambda\left(\frac{|x|}{2\sqrt{it}}+\omega\sqrt{it}\right),\qquad t>0,\,x\in\mathbb{R}\setminus\{0\},
\end{equation}
is a solution of (\ref{Eq_Schroedinger_deltaprime_1}) for every $\omega\in\mathbb{C}$. With the help of Lemma~\ref{lem_Lambda}~(ii) we first compute
\begin{align}
\frac{\partial}{\partial t}\phi(t,x;\omega)&=ie^{-\frac{x^2}{4it}}\left(\omega^2\Lambda\left(\frac{|x|}{2\sqrt{it}}+\omega\sqrt{it}\right)+\frac{1}{it\sqrt{\pi}}\left(\frac{|x|}{2\sqrt{it}}-\omega\sqrt{it}\right)\right), \notag\\
\frac{\partial}{\partial x}\phi(t,x;\omega)&=\sgn(x)e^{-\frac{x^2}{4it}}\left(\omega\Lambda\left(\frac{|x|}{2\sqrt{it}}+\omega\sqrt{it}\right)-\frac{1}{\sqrt{i\pi t}}\right), \label{Eq_phi_x} \\
\frac{\partial^2}{\partial x^2}\phi(t,x;\omega)&=e^{-\frac{x^2}{4it}}\left(\omega^2\Lambda\left(\frac{|x|}{2\sqrt{it}}+\omega\sqrt{it}\right)+\frac{1}{it\sqrt{\pi}}\left(\frac{|x|}{2\sqrt{it}}-\omega\sqrt{it}\right)\right). \notag
\end{align}
Using this with $\omega=\beta$ and $\omega=-ik$ immediately shows that $\varphi_{\delta'}$ and hence $\Psi_{\delta'}$ solve (\ref{Eq_Schroedinger_deltaprime_1}).

\vspace{0.2cm}
\noindent
\textit{Step 2.} We verify that the jump conditions (\ref{Eq_Schroedinger_deltaprime_2}) and (\ref{Eq_Schroedinger_deltaprime_3}) are satisfied. For this
we calculate the limits $x\rightarrow 0^\pm$ of (\ref{Eq_phi}) and (\ref{Eq_phi_x}) and insert them into (\ref{Eq_varphi_deltaprime}), to get
\begin{align*}
\varphi_{\delta'}(t,0^\pm;k)&=\pm\frac{\beta\Lambda\left(\beta\sqrt{it}\right)+ik\Lambda\left(-ik\sqrt{it}\right)}{2(\beta+ik)}, \\
\frac{\partial}{\partial x}\varphi_{\delta'}(t,0^\pm;k)&=\frac{\beta^2\Lambda\left(\beta\sqrt{it}\right)+k^2\Lambda\left(-ik\sqrt{it}\right)}{2(\beta+ik)}-\frac{1}{2\sqrt{i\pi t}}.
\end{align*}
Plugging this into (\ref{Eq_deltaprime_solution}) gives
\begin{align*}
\Psi_{\delta'}(t,x;k)&=e^{-ik^2t}\mp\frac{ik\beta}{\beta^2+k^2}\Lambda\left(\beta\sqrt{it}\right)\pm\frac{ik}{2}\left(\frac{\Lambda\left(-ik\sqrt{it}\right)}{\beta+ik}+\frac{\Lambda\left(ik\sqrt{it}\right)}{\beta-ik}\right), \\
\frac{\partial}{\partial x}\Psi_{\delta'}(t,x;k)&=ike^{-ik^2t}-\frac{ik\beta^2}{\beta^2+k^2}\Lambda\left(\beta\sqrt{it}\right)+\frac{k^2}{2}\left(\frac{\Lambda\left(-ik\sqrt{it}\right)}{\beta+ik}-\frac{\Lambda\left(ik\sqrt{it}\right)}{\beta-ik}\right).
\end{align*}
Finally, using property (i) in Lemma~\ref{lem_Lambda} of $\Lambda(-z)$ leads to
\begin{align*}
\Psi_{\delta'}(t,0^\pm;k)&=e^{-ik^2t}\left(1\pm\frac{ik}{\beta+ik}\right)\mp\frac{ik\beta}{\beta^2+k^2}\Lambda\left(\beta\sqrt{it}\right)\mp\frac{k^2}{\beta^2+k^2}\Lambda\left(ik\sqrt{it}\right), \\
\frac{\partial}{\partial x}\Psi_{\delta'}(t,0^\pm;k)&=\beta\left(\frac{ik}{\beta+ik}e^{-ik^2t}-\frac{ik\beta}{\beta^2+k^2}\Lambda\left(\beta\sqrt{it}\right)-\frac{k^2}{\beta^2+k^2}\Lambda\left(ik\sqrt{it}\right)\right),
\end{align*}
from which it is clear that (\ref{Eq_Schroedinger_deltaprime_2}) and (\ref{Eq_Schroedinger_deltaprime_3}) are both satisfied.

\vspace{0.2cm}
\noindent
\textit{Step 3.} We check the initial condition (\ref{Eq_Schroedinger_deltaprime_4}). In fact, due to the asymptotic behaviour (\ref{Eq_Lambda_asymptotics})
we have $\varphi_{\delta'}(t,x;k)\rightarrow 0$ for $t\rightarrow 0^+$, and hence
\begin{equation*}
\Psi_{\delta'}(0^+,x;k)=\Psi_\text{free}(0^+,x;k)=e^{ikx}.
\end{equation*}

\noindent
\textit{Step 4.} In this step we show the large time asymptotics (\ref{Eq_deltaprime_asymptotics}) of the solution. Note first that for large $t$ we have
\begin{align*}
\Re\left(\frac{|x|}{2\sqrt{it}}+\beta\sqrt{it}\right)\geq 0,\qquad\text{if }\beta>0, \\
\Re\left(\frac{|x|}{2\sqrt{it}}+\beta\sqrt{it}\right)\leq 0,\qquad\text{if }\beta<0.
\end{align*}
Hence, from (\ref{Eq_Lambda_asymptotics}) we get for every fixed $x\in\mathbb{R}\setminus\{0\}$ the asymptotic behaviour
\begin{align*}
\Lambda\left(\frac{|x|}{2\sqrt{it}}+\beta\sqrt{it}\right)&=1_{\mathbb{R}^-}(\beta)2e^{\left(\frac{|x|}{2\sqrt{it}}+\beta\sqrt{it}\right)^2}+\frac{1}{\sqrt{\pi}\left(\frac{|x|}{2\sqrt{it}}+\beta\sqrt{it}\right)}+\mathcal{O}\left(\frac{1}{\left|\frac{|x|}{2\sqrt{it}}+\beta\sqrt{it}\right|^2}\right) \\
&=1_{\mathbb{R}^-}(\beta)2e^{\left(\frac{|x|}{2\sqrt{it}}+\beta\sqrt{it}\right)^2}+\frac{2\sqrt{it}}{\sqrt{\pi}\left(|x|+2\beta it\right)}+\mathcal{O}\left(\frac{1}{t}\right).
\end{align*}
Similarly, we get the asymptotics
\begin{equation*}
\Lambda\left(\frac{|x|}{2\sqrt{it}}-ik\sqrt{it}\right)=1_{\mathbb{R}^-}(k)2e^{\left(\frac{|x|}{2\sqrt{it}}-ik\sqrt{it}\right)^2}+
\frac{2\sqrt{it}}{\sqrt{\pi}\left(|x|+2kt\right)}+\mathcal{O}\left(\frac{1}{t}\right)
\end{equation*}
for the second term in (\ref{Eq_varphi_deltaprime}). Substituting this into (\ref{Eq_varphi_deltaprime}) gives
\begin{align*}
\varphi_{\delta'}(t,x;k)=\sgn(x)&\left(\frac{1_{\mathbb{R}^-}(\beta)\beta}{\beta+ik}e^{\beta|x|+i\beta^2t}+\frac{1_{\mathbb{R}^-}(k)ik}{\beta+ik}e^{-ik|x|-ik^2t}\right. \\
&\hspace{3.45cm}\left.+\frac{|x|\sqrt{it}}{\sqrt{\pi}(|x|+2\beta it)(|x|+2kt)}\right)+\mathcal{O}\left(\frac{1}{t}\right).
\end{align*}
Since the third summand in this expression is of order $t^{-3/2}$ and hence, in particular, $\mathcal{O}(\frac{1}{t})$, the expansion reduces to
\begin{equation*}
\varphi_{\delta'}(t,x;k)=\sgn(x)\left(\frac{1_{\mathbb{R}^-}(\beta)\beta}{\beta+ik}e^{\beta|x|+i\beta^2t}+\frac{1_{\mathbb{R}^-}(k)ik}{\beta+ik}e^{-ik|x|-ik^2t}\right)+\mathcal{O}\left(\frac{1}{t}\right).
\end{equation*}
Using this in (\ref{Eq_deltaprime_solution}), immediately gives the asymptotics (\ref{Eq_deltaprime_asymptotics}) of $\Psi_{\delta'}$.
\end{proof}

\section{Proof of Theorem \ref{satz_Convergence_solutions}}\label{sec_Proof_of_Theorem}

For the proof of Theorem \ref{satz_Convergence_solutions} some preparatory statements are needed.
The following lemma  provides
a simple algebraic reformulation of the power series of the difference quotient of an entire function.

\begin{lem}\label{lem_Series_expansion_of_the_difference_quotient}
Let $F\in\mathcal{H}(\mathbb{C})$ be an entire function, that is, $F$ admits the power series representation
\begin{equation}\label{Eq_Difference_quotient_power_series}
F(z)=\sum\limits_{n=0}^\infty f_nz^n,\qquad z\in\mathbb{C},
\end{equation}
with the coefficients $f_n=\frac{F^{(n)}(0)}{n!}$.
Then for every $a\in\mathbb{C}\setminus\{0\}$ the difference quotient admits the series representation
\begin{equation*}
\frac{F(z+a)-F(z)}{a}=\sum\limits_{m=0}^\infty\sum\limits_{n=0}^\infty f_{n+m+1}\vvect{n+m+1}{m+1}z^na^m,\qquad z\in\mathbb{C}.
\end{equation*}
\end{lem}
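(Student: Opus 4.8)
The plan is to start from the Taylor expansion of $F$ at the origin, substitute $z+a$, expand by the binomial theorem, and then rearrange. First I would write $F(z+a)=\sum_{k=0}^\infty f_k(z+a)^k$ and use $(z+a)^k=\sum_{j=0}^k\binom{k}{j}z^ja^{k-j}$ to obtain the double series $\sum_{k=0}^\infty\sum_{j=0}^k f_k\binom{k}{j}z^ja^{k-j}$. Subtracting $F(z)=\sum_{k=0}^\infty f_kz^k$ removes exactly the terms with $j=k$ (the $a^0$-terms), leaving $\sum_{k=1}^\infty\sum_{j=0}^{k-1}f_k\binom{k}{j}z^ja^{k-j}$, and dividing by $a\neq0$ lowers each power of $a$ by one.

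Next I would reindex. In the remaining family the exponent of $z$ is $j\in\{0,\dots,k-1\}$ and the exponent of $a$ after division is $k-j-1\geq0$; setting $n:=j$ and $m:=k-j-1$, so that $k=n+m+1$, the pair $(n,m)$ now ranges independently over $\mathbb{N}_0\times\mathbb{N}_0$, and $\binom{k}{j}=\binom{n+m+1}{n}=\binom{n+m+1}{m+1}$. This yields precisely the asserted expression $\sum_{m=0}^\infty\sum_{n=0}^\infty f_{n+m+1}\binom{n+m+1}{m+1}z^na^m$.

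The only genuine obstacle is justifying that all these manipulations — termwise subtraction, termwise division by $a$, interchange of the order of summation, and the reindexing — are legitimate, which I would handle via absolute convergence. Since $F$ is entire, its Taylor series has infinite radius of convergence, so $\sum_{k=0}^\infty|f_k|\rho^k<\infty$ for every $\rho\geq0$; taking $\rho=|z|+|a|$ and expanding $(|z|+|a|)^k=\sum_{j=0}^k\binom{k}{j}|z|^j|a|^{k-j}$ shows that the family $\big(f_k\binom{k}{j}z^ja^{k-j}\big)_{0\leq j\leq k}$ is absolutely summable. Hence the unordered sum is well defined, Fubini for series applies, and the rearrangement effected by the substitution $(n,m)\mapsto k=n+m+1$ does not change the value. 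I would conclude by remarking that the final double series therefore converges absolutely for every $z\in\mathbb{C}$, so the stated identity holds pointwise on all of $\mathbb{C}$.
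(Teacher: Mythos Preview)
Your argument is correct and follows essentially the same route as the paper: expand $F(z+a)$ via the binomial theorem, subtract the $a^0$-terms coming from $F(z)$, divide by $a$, and reindex by setting $k=n+m+1$. In fact you are slightly more careful than the paper, since you explicitly justify the interchange of summation by the absolute convergence of $\sum_k |f_k|(|z|+|a|)^k$, a point the paper leaves implicit.
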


\begin{proof}
Inserting the power series (\ref{Eq_Difference_quotient_power_series}) into the difference quotient gives
\begin{equation}\label{Eq_Difference_quotient_1}
\frac{F(z+a)-F(z)}{a}=\sum\limits_{n=1}^\infty f_n\frac{(z+a)^n-z^n}{a}.
\end{equation}
We can now use the binomic formula
\begin{equation*}
(z+a)^n=z^n+\sum\limits_{m=1}^n\vvect{n}{m}z^{n-m}a^m
\end{equation*}
to rewrite the series (\ref{Eq_Difference_quotient_1}) as
\begin{align*}
\frac{F(z+a)-F(z)}{a}&=\sum\limits_{n=1}^\infty\frac{f_n}{a}\sum\limits_{m=1}^n\vvect{n}{m}z^{n-m}a^m
\\
&
=\sum\limits_{m=1}^\infty\sum\limits_{n=m}^\infty\frac{f_n}{a}\vvect{n}{m}z^{n-m}a^m \\
&=\sum\limits_{m=0}^\infty\sum\limits_{n=0}^\infty f_{n+m+1}\vvect{n+m+1}{m+1}z^na^m.
\end{align*}
\end{proof}

The following Lemma \ref{lem_Existence_Propagator_delta} for the $\delta$-potential
and its counterpart Lemma \ref{lem_Existence_Propagator_deltaprime} for the $\delta'$-potential are further useful ingredients in the proof of
Theorem \ref{satz_Convergence_solutions}.
Here Lemma \ref{lem_Series_expansion_of_the_difference_quotient} is used to construct time evolution operators,
which take the initial value of the Schr\"odinger equation and pointwise give the solutions (\ref{Eq_delta_deltaprime_solutions}).
In the case of a $\delta$-potential a similar operator was already provided in \cite{BCS19}; however, there
the (technical) restriction $|k|<|\alpha|$ (wave vector smaller than the potential strength) appeared, which is avoided here.

\begin{lem}\label{lem_Existence_Propagator_delta}
For any fixed $x\in\mathbb{R}\setminus\{0\}$ and $t>0$ there exists a continuous linear operator
$U_\delta(t,x):A_1(\mathbb{C})\rightarrow A_1(\mathbb{C})$ such that \eqref{Eq_delta_solution} can be represented as
\begin{equation}\label{Eq_Existence_Propagator_delta}
\Psi_\delta(t,x;k)=U_\delta(t,x)e^{ik\xi}\big|_{\xi=0}.
\end{equation}
Furthermore, for every $B\geq 0$ there exists $\widetilde{S}_{B,\delta}(t,x)$, continuous in $t$ and $x$, such that the estimate
\begin{equation}\label{Eq_Estimate_propagator_delta}
|U_\delta(t,x)F(\xi)|\leq A\widetilde{S}_{eB,\delta}(t,x)e^{eB|\xi|}
\end{equation}
holds for every $F\in A_1(\mathbb{C})$ satisfying $|F(\xi)|\leq Ae^{B|\xi|}$.
\end{lem}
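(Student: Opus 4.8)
The plan is to exhibit $U_\delta(t,x)$ explicitly as an operator of the type constructed in Proposition~\ref{prop_Operator_representation}, obtained by expanding the closed-form solution $\Psi_\delta(t,x;k)$ from Theorem~\ref{satz_Plane_wave_solution} as a power series in the wave number $k$. Concretely, I would fix $t>0$ and $x\in\mathbb{R}\setminus\{0\}$ and regard
$$
k\longmapsto \Psi_\delta(t,x;k)=e^{ikx-ik^2t}+\varphi_\delta(t,x;k)+\varphi_\delta(t,-x;-k)
$$
as a function of the real variable $k$. The first step is to check that this map extends to an entire function of $k$ and therefore has an everywhere-convergent power series $\Psi_\delta(t,x;k)=\sum_{m=0}^\infty c_m(t,x)k^m$. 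The free part $e^{ikx-ik^2t}$ is manifestly entire in $k$. For the terms $\varphi_\delta$ one uses the power series \eqref{Eq_Lambda_power_series} for $\Lambda$ together with the explicit form \eqref{Eq_varphi_delta}: the prefactor $\frac{\alpha}{2(\alpha+ik)}$ is meromorphic with its only pole at $k=i\alpha$, but the bracket $\Lambda(\frac{|x|}{2\sqrt{it}}+\alpha\sqrt{it})-\Lambda(\frac{|x|}{2\sqrt{it}}-ik\sqrt{it})$ vanishes precisely where the two arguments coincide, i.e. at $k=i\alpha$, so the apparent pole is removable and $\varphi_\delta(t,x;k)$ is entire in $k$. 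This cancellation is exactly the point where the technical restriction $|k|<|\alpha|$ from \cite{BCS19} is removed.

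Once the coefficients $c_m(t,x)$ are in hand, I would define
$$
U_\delta(t,x)F(\xi)\coloneqq\sum_{m=0}^\infty(-i)^m c_m(t,x)\,\frac{d^m}{d\xi^m}F(\xi),
$$
which is precisely the operator \eqref{Eq_Operator} associated with the power series $\Psi(k)=\sum_m c_m(t,x)k^m$. Proposition~\ref{prop_Operator_representation} then immediately gives that $U_\delta(t,x):A_1(\mathbb{C})\to A_1(\mathbb{C})$ is continuous, that \eqref{Eq_Existence_Propagator_delta} holds because $\Psi_\delta(t,x;k)=U_\delta(t,x)e^{ik\xi}\big|_{\xi=0}$ by \eqref{Eq_Psi_Operator_representation}, and that the estimate \eqref{Eq_Operator_boundedness} reads $|U_\delta(t,x)F(\xi)|\le A\,\widetilde S_{eB,\delta}(t,x)\,e^{eB|\xi|}$ with
$$
\widetilde S_{B,\delta}(t,x)\coloneqq\sum_{m=0}^\infty|c_m(t,x)|\,B^m .
$$
It remains to verify that this series converges for every $B\ge 0$ and defines a function continuous in $(t,x)$. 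For that I would obtain an explicit bound on $|c_m(t,x)|$: splitting $\Psi_\delta$ into the free part and the two $\varphi_\delta$-terms, the free coefficients are $\frac{1}{m!}\sum$-type expressions in $x$ and $t$ that decay faster than geometrically, and the $\varphi_\delta$-coefficients come from multiplying the Taylor series of $\frac{1}{\alpha+ik}$ (radius of convergence $|\alpha|$, so coefficients $\sim|\alpha|^{-m}$) by the Taylor coefficients of $\Lambda(\frac{|x|}{2\sqrt{it}}-ik\sqrt{it})$, which by \eqref{Eq_Lambda_power_series} are bounded by $\frac{|\sqrt{it}|^{m}}{\Gamma(m/2+1)}$ and hence decay super-exponentially in $m$. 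A Cauchy-product estimate then shows $|c_m(t,x)|\le \frac{K(t,x)\,R(t,x)^m}{\Gamma(m/2+1)}$ for suitable locally bounded $K,R$, and the factor $\Gamma(m/2+1)^{-1}$ makes $\sum_m|c_m(t,x)|B^m$ converge for all $B$ and, by a dominated-convergence / Weierstrass-$M$-test argument on compact subsets of $(0,\infty)\times(\mathbb{R}\setminus\{0\})$, depend continuously on $(t,x)$.

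The main obstacle I anticipate is not the operator-theoretic part, which is essentially a corollary of Proposition~\ref{prop_Operator_representation}, but the bookkeeping needed to produce a clean, locally uniform bound on the coefficients $c_m(t,x)$ — in particular handling the removable singularity at $k=i\alpha$ carefully enough that the resulting bound stays finite and continuous as functions of $t$ and $x$ (including the behaviour of $\sqrt{it}$ and $\frac{|x|}{2\sqrt{it}}$ on compacta), and combining the Cauchy product of the geometric-type series $\frac{1}{\alpha+ik}$ with the $\Gamma$-suppressed series for $\Lambda$ so that the $\Gamma$-decay survives. I would isolate this in an auxiliary estimate on $|c_m(t,x)|$ and then let Proposition~\ref{prop_Operator_representation} do the rest.
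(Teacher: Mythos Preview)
Your overall strategy---regard $k\mapsto\Psi_\delta(t,x;k)$ as a power series and feed its coefficients into Proposition~\ref{prop_Operator_representation}---is exactly what the paper does, and your treatment of $\Psi_{\text{free}}$ agrees with it. You also correctly identify that the apparent pole of $\varphi_\delta$ at $k=i\alpha$ is removable.

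The genuine gap is in the coefficient estimate for $\varphi_\delta$. Bounding $|c_m(t,x)|$ by the Cauchy product of the Taylor coefficients of $\frac{1}{\alpha+ik}$ (which behave like $|\alpha|^{-m}$) with those of the $\Lambda$-bracket (which carry the $\Gamma(m/2+1)^{-1}$ decay) cannot yield the bound $|c_m|\le K\,R^m/\Gamma(m/2+1)$ you claim: applying the triangle inequality to $\sum_{j=0}^m a_j b_{m-j}$ is dominated by the terms with $j$ close to $m$, where $|a_j|\sim|\alpha|^{-m}$ and $|b_{m-j}|$ is of order one, so the best you get is $|c_m|\lesssim C(t,x)\,|\alpha|^{-m}$. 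That is only geometric decay, and then $\widetilde S_{B,\delta}(t,x)=\sum_m|c_m|B^m$ converges only for $B<|\alpha|$---precisely the restriction from \cite{BCS19} that the lemma is supposed to remove. The true coefficients of the entire function $\varphi_\delta$ do decay faster, but only because of cancellation in the Cauchy sum, which a triangle-inequality estimate discards.

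The paper's device for capturing this cancellation is Lemma~\ref{lem_Series_expansion_of_the_difference_quotient}: one rewrites \eqref{Eq_varphi_delta} as
\[
\varphi_\delta(t,x;k)=\frac{\alpha\sqrt{it}}{2}\,e^{-\frac{x^2}{4it}}\,
\frac{\Lambda\bigl(\tfrac{|x|}{2\sqrt{it}}+\alpha\sqrt{it}\bigr)-\Lambda\bigl(\tfrac{|x|}{2\sqrt{it}}-ik\sqrt{it}\bigr)}{(\alpha+ik)\sqrt{it}},
\]
a difference quotient of $\Lambda$ with $z=\frac{|x|}{2\sqrt{it}}-ik\sqrt{it}$ and $a=(\alpha+ik)\sqrt{it}$. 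Lemma~\ref{lem_Series_expansion_of_the_difference_quotient} then expands this directly as a double series in $z^n a^m$ with coefficients $\frac{(-1)^{n+m+1}}{\Gamma(\frac{n+m+3}{2})}\binom{n+m+1}{m+1}$, so the singular factor $(\alpha+ik)^{-1}$ never appears and the $\Gamma$-decay is manifest term by term. After a binomial re-expansion in $k$, the resulting $\widetilde S_{B,\delta,\pm}(t,x)$ is an explicit evaluation of $\Lambda$ at real shifted arguments and hence finite and continuous for every $B\ge 0$. That is the missing ingredient in your proposal.
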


\begin{proof}
Since $\Psi_\delta$ decomposes into (\ref{Eq_delta_solution}) it is sufficient to ensure (\ref{Eq_Existence_Propagator_delta})
and (\ref{Eq_Estimate_propagator_delta}) for its components $\Psi_\text{free}$ and $\varphi_\delta$. Starting with $\Psi_\text{free}$ we can write it as the power series
\begin{equation}\label{Eq_Psifree_series}
\begin{split}
\Psi_\text{free}(t,x;k)&=\sum\limits_{m=0}^\infty\frac{1}{m!}(ikx-ik^2t)^m\\
&=\sum\limits_{m=0}^\infty\frac{1}{m!}\sum_{j=0}^m\vvect{m}{j}(ikx)^{m-j}(-ik^2t)^j\\
&=\sum\limits_{m=0}^\infty\frac{1}{m!}\sum_{j=m}^{2m}\vvect{m}{j-m}(ix)^{2m-j}(-it)^{j-m} k^j\\
&=\sum\limits_{j=0}^\infty\sum\limits_{m=\lceil\frac{j}{2}\rceil}^j\frac{(ix)^{2m-j}(-it)^{j-m}}{m!}\vvect{m}{j-m}k^j,
\end{split}
\end{equation}
which is exactly the form (\ref{Eq_Psi_power_series}). Hence, by Proposition \ref{prop_Operator_representation} there exists
a continuous operator $U_\text{free}(t,x):A_1(\mathbb{C})\rightarrow A_1(\mathbb{C})$ such that
\begin{equation}\label{Eq_Psifree_operator}
\Psi_\text{free}(t,x;k)=U_\text{free}(t,x)e^{ik\xi}\big|_{\xi=0}.
\end{equation}
Argueing in the same way as in \eqref{Eq_Psifree_series} the respective constant $S_{B,\text{free}}(t,x)$ in the bound (\ref{Eq_Operator_boundedness}) can be estimated by
\begin{equation}\label{Eq_Free_constant}
\begin{split}
S_{B,\text{free}}(t,x)&\leq\sum\limits_{j=0}^\infty\sum\limits_{m=\lceil\frac{j}{2}\rceil}^j\frac{|x|^{2m-j}t^{j-m}}{m!}\vvect{m}{j-m}B^j \\
&=\sum\limits_{m=0}^\infty\frac{1}{m!}(B\vert x\vert + B^2t)^m\\
&=e^{B|x|+B^2t}\eqqcolon\widetilde{S}_{B,\text{free}}(t,x).
\end{split}
\end{equation}
Next we rearrange the terms of the function $\varphi_\delta$ in (\ref{Eq_varphi_delta}) in the form
\begin{equation}\label{Eq_varphi_difference_quotient}
\varphi_\delta(t,x;k)=\frac{\alpha\sqrt{it}}{2}e^{-\frac{x^2}{4it}}\frac{\Lambda\left(\frac{|x|}{2\sqrt{it}}+\alpha\sqrt{it}\right)-
\Lambda\left(\frac{|x|}{2\sqrt{it}}-ik\sqrt{it}\right)}{(\alpha+ik)\sqrt{it}}.
\end{equation}
It follows that this is a difference quotient of the form (\ref{Eq_Difference_quotient_power_series}) with $z=\frac{|x|}{2\sqrt{it}}-ik\sqrt{it}$ and $a=(\alpha+ik)\sqrt{it}$. Hence we obtain the series expansion \begin{equation*}
\varphi_\delta(t,x;k)=\frac{\alpha e^{-\frac{x^2}{4it}}}{2}\sum\limits_{m=0}^\infty\sum\limits_{n=0}^\infty\frac{(-\sqrt{it})^{n+m+1}}{\Gamma(\frac{n+m+3}{2})}\vvect{n+m+1}{m+1}\left(\frac{|x|}{2it}-ik\right)^n\left(\alpha+ik\right)^m,
\end{equation*}
where we used the coefficients in the power series representation (\ref{Eq_Lambda_power_series}) of $\Lambda$.
In a similar way as in (\ref{Eq_Psifree_series}) and (\ref{Eq_Free_constant}) for the free part,
we can use the binomic formula to further expand this series into the form (\ref{Eq_Psi_power_series}).
Hence we get an operator $U_{\delta,+}(t,x):A_1(\mathbb{C})\rightarrow A_1(\mathbb{C})$ satisfying
\begin{equation*}
\varphi_\delta(t,x;k)=U_{\delta,+}(t,x)e^{ik\xi}\big|_{\xi=0},
\end{equation*}
with a constant $S_{B,\delta,+}(t,x)$ in the corresponding bound (\ref{Eq_Operator_boundedness}). This constant can be estimated by
\begin{align*}
S_{B,\delta,+}(t,x)&\leq\frac{|\alpha|}{2}\sum\limits_{m=0}^\infty\sum\limits_{n=0}^\infty\frac{(\sqrt{t})^{n+m+1}}{\Gamma(\frac{n+m+3}{2})}\vvect{n+m+1}{m+1}\left(\frac{|x|}{2t}+B\right)^n\left(|\alpha|+B\right)^m \\
&=\frac{|\alpha|}{2}\frac{\Lambda\left(-\frac{|x|}{2\sqrt{t}}-(2B+|\alpha|)\sqrt{t}\right)
-\Lambda\left(-\frac{|x|}{2\sqrt{t}}-B\sqrt{t}\right)}{(|\alpha|+B)\sqrt{t}}\eqqcolon\widetilde{S}_{B,\delta,+}(t,x).
\end{align*}
The same reasoning for $\varphi_\delta(t,-x;-k)$ leads to an  operator $U_{\delta,-}(t,x):A_1(\mathbb{C})\rightarrow A_1(\mathbb{C})$ and a constant
$\widetilde{S}_{B,\delta,-}(t,x)$. Finally, we add all three terms together to get
\begin{align*}
U_\delta(t,x)&\coloneqq U_{\text{free}}(t,x)+U_{\delta,+}(t,x)+U_{\delta,-}(t,x), \\
\widetilde{S}_{B,\delta}(t,x)&\coloneqq\widetilde{S}_{B,\text{free}}(t,x)+\widetilde{S}_{B,\delta,+}(t,x)+\widetilde{S}_{B,\delta,-}(t,x),
\end{align*}
which satisfy (\ref{Eq_Existence_Propagator_delta}) and (\ref{Eq_Estimate_propagator_delta}).
\end{proof}

Although the proof is slightly different an analogous statement as in Lemma~\ref{lem_Existence_Propagator_delta} holds for the $\delta'$-potential.

\begin{lem}\label{lem_Existence_Propagator_deltaprime}
For any fixed $x\in\mathbb{R}\setminus\{0\}$ and $t>0$ there exists a continuous linear operator
$U_{\delta'}(t,x):A_1(\mathbb{C})\rightarrow A_1(\mathbb{C})$ such that \eqref{Eq_deltaprime_solution} can be represented as
\begin{equation}\label{Eq_Existence_Propagator_deltaprime}
\Psi_{\delta'}(t,x;k)=U_{\delta'}(t,x)e^{ik\xi}\big|_{\xi=0}.
\end{equation}
Furthermore, for every $B\geq 0$ there exists $\widetilde{S}_{B,\delta'}(t,x)$, continuous in $t$ and $x$, such that the estimate
\begin{equation}\label{Eq_Estimate_propagator_deltaprime}
|U_{\delta'}(t,x)F(\xi)|\leq A\widetilde{S}_{eB,\delta'}(t,x)e^{eB|\xi|}
\end{equation}
holds for every $F\in A_1(\mathbb{C})$ satisfying $|F(\xi)|\leq Ae^{B|\xi|}$.
\end{lem}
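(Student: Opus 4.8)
The plan is to follow the pattern of the proof of Lemma~\ref{lem_Existence_Propagator_delta}: decompose $\Psi_{\delta'}$ according to \eqref{Eq_deltaprime_solution} into the free part $\Psi_\text{free}$ and the two terms $\varphi_{\delta'}(t,x;k)$, $\varphi_{\delta'}(t,-x;-k)$, construct continuous operators $U_\text{free}(t,x)$, $U_{\delta',+}(t,x)$, $U_{\delta',-}(t,x)$ representing these three summands in the sense of \eqref{Eq_Existence_Propagator_deltaprime}, and finally set $U_{\delta'}(t,x)\coloneqq U_\text{free}(t,x)+U_{\delta',+}(t,x)+U_{\delta',-}(t,x)$. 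For $\Psi_\text{free}$ the operator $U_\text{free}(t,x)$ and the majorant $\widetilde{S}_{B,\text{free}}(t,x)=e^{B|x|+B^2t}$ from \eqref{Eq_Psifree_series}--\eqref{Eq_Free_constant} carry over verbatim. The genuinely new point — the reason why this proof differs from the one for the $\delta$-potential — is that the bracket in \eqref{Eq_varphi_deltaprime} is the \emph{weighted} combination $\beta\Lambda(\cdots)+ik\Lambda(\cdots)$ rather than a plain difference of two $\Lambda$-values, so it is not directly a difference quotient.

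To handle $\varphi_{\delta'}(t,x;k)$ I would first use the algebraic identity that, with $z\coloneqq\frac{|x|}{2\sqrt{it}}-ik\sqrt{it}$ and $a\coloneqq(\beta+ik)\sqrt{it}$ (so that $z+a=\frac{|x|}{2\sqrt{it}}+\beta\sqrt{it}$ and $\beta=\frac{a}{\sqrt{it}}-ik$),
\begin{equation*}
\frac{\beta\Lambda(z+a)+ik\Lambda(z)}{\beta+ik}=\Lambda\Big(\tfrac{|x|}{2\sqrt{it}}+\beta\sqrt{it}\Big)-ik\sqrt{it}\,\frac{\Lambda(z+a)-\Lambda(z)}{a},
\end{equation*}
so that
\begin{equation*}
\varphi_{\delta'}(t,x;k)=\frac{\sgn(x)}{2}e^{-\frac{x^2}{4it}}\Big(\Lambda\big(\tfrac{|x|}{2\sqrt{it}}+\beta\sqrt{it}\big)-ik\sqrt{it}\,\frac{\Lambda(z+a)-\Lambda(z)}{a}\Big).
\end{equation*}
The first summand is constant in $k$, while the second is $-ik\sqrt{it}$ times a difference quotient of $\Lambda$, to which Lemma~\ref{lem_Series_expansion_of_the_difference_quotient} applies with the coefficients $f_n=\frac{(-1)^n}{\Gamma(\frac{n}{2}+1)}$ from \eqref{Eq_Lambda_power_series}. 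Inserting $z^n=(\sqrt{it})^n\big(\frac{|x|}{2it}-ik\big)^n$ and $a^m=(\sqrt{it})^m(\beta+ik)^m$, expanding by the binomial theorem exactly as in \eqref{Eq_Psifree_series}, and absorbing the factor $-ik$ (which merely shifts the power-series coefficients) turns $\varphi_{\delta'}(t,x;\cdot)$ into an absolutely convergent power series in $k$ of the form \eqref{Eq_Psi_power_series}. Proposition~\ref{prop_Operator_representation} then yields a continuous operator $U_{\delta',+}(t,x)\colon A_1(\mathbb{C})\to A_1(\mathbb{C})$ with $\varphi_{\delta'}(t,x;k)=U_{\delta',+}(t,x)e^{ik\xi}|_{\xi=0}$, and the same computation applied to $\varphi_{\delta'}(t,-x;-k)$ — where now $\sgn(-x)=-\sgn(x)$ and one takes $z=\frac{|x|}{2\sqrt{it}}+ik\sqrt{it}$, $a=(\beta-ik)\sqrt{it}$ — produces $U_{\delta',-}(t,x)$.

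For the estimate \eqref{Eq_Estimate_propagator_deltaprime} I would track the constant $S_B$ of Proposition~\ref{prop_Operator_representation} through these steps. The $k$-independent summand contributes $\tfrac12\big|\Lambda\big(\frac{|x|}{2\sqrt{it}}+\beta\sqrt{it}\big)\big|$; the difference-quotient summand is estimated, exactly as the constant $S_{B,\delta,+}$ in Lemma~\ref{lem_Existence_Propagator_delta} (with $\alpha$ replaced by $\beta$ and one additional factor $B$ stemming from the prefactor $-ik$), by a finite expression built from values of $\Lambda$ at real arguments of the form $-\frac{|x|}{2\sqrt{t}}-c\sqrt{t}$; the analogous bound holds for $\varphi_{\delta'}(t,-x;-k)$. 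All of these are continuous in $t>0$ and $x\in\mathbb{R}$ since $\Lambda$ is entire and $|\beta|+B\geq|\beta|>0$. Summing the three contributions defines $\widetilde{S}_{B,\delta'}(t,x)$, and the bounds \eqref{Eq_Operator_boundedness} for the three operators then give \eqref{Eq_Estimate_propagator_deltaprime}.

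The only genuine obstacle I anticipate is spotting the algebraic rewriting that exposes the difference-quotient structure hidden inside $\beta\Lambda+ik\Lambda$; once it is available, everything else — justifying the interchange of the double sums, the binomial re-expansion into the form \eqref{Eq_Psi_power_series}, and the identification of the majorant with a combination of $\Lambda$-values — is routine and follows the $\delta$-case step by step.
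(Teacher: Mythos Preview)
Your proposal is correct; the difference from the paper's proof is in the algebraic splitting of $\varphi_{\delta'}$. The paper writes
\[
\frac{\beta\Lambda(z+a)+ik\Lambda(z)}{\beta+ik}
=\underbrace{\frac{\beta\bigl(\Lambda(z+a)-\Lambda(z)\bigr)}{\beta+ik}}_{\varphi_{\delta'}^{(0)}\text{-part}}
+\underbrace{\Lambda(z)}_{\varphi_{\delta'}^{(1)}\text{-part}},
\]
so that $\varphi_{\delta'}^{(0)}$ is exactly the $\delta$-difference quotient \eqref{Eq_varphi_difference_quotient} with $\alpha$ replaced by $\beta$ and an extra $\sgn(x)$, while $\varphi_{\delta'}^{(1)}=\tfrac{\sgn(x)}{2}e^{-x^2/4it}\Lambda\bigl(\tfrac{|x|}{2\sqrt{it}}-ik\sqrt{it}\bigr)$ is expanded directly via \eqref{Eq_Lambda_power_series} and bounded by $\tfrac12\Lambda\bigl(-\tfrac{|x|}{2\sqrt t}-B\sqrt t\bigr)$. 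Your identity is the complementary rearrangement, pulling out $\Lambda(z+a)$ (constant in $k$) instead of $\Lambda(z)$, at the cost of an extra factor $ik$ in front of the difference quotient. Both decompositions feed into Proposition~\ref{prop_Operator_representation} the same way and yield continuous majorants built from values of $\Lambda$; the paper's version keeps the difference-quotient prefactor $k$-free (a constant $\beta$) and pushes the residual $k$-dependence into a single $\Lambda$-power-series term, whereas yours trivializes the residual term but pays with an index shift in the difference-quotient series. Neither choice is materially simpler, and the remaining verification (binomial re-expansion, absolute convergence, continuity of the majorant in $(t,x)$) is routine in both cases.
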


\begin{proof}
Since $\Psi_{\delta'}$ decomposes into (\ref{Eq_deltaprime_solution}) it is sufficient to ensure
(\ref{Eq_Existence_Propagator_deltaprime}) and (\ref{Eq_Estimate_propagator_deltaprime}) for its components $\Psi_\text{free}$ and $\varphi_{\delta'}$.
Since $\Psi_\text{free}$ is already treated in (\ref{Eq_Psifree_operator}) and (\ref{Eq_Free_constant}) we only consider $\varphi_{\delta'}$.
For this we first split (\ref{Eq_varphi_deltaprime}) into $\varphi_{\delta'}=\varphi_{\delta'}^{(0)}+\varphi_{\delta'}^{(1)}$, where
\begin{subequations}
\begin{align}
\varphi_{\delta'}^{(0)}(t,x;k)&\coloneqq\frac{\sgn(x)\beta\sqrt{it}}{2}e^{-\frac{x^2}{4it}}\frac{\Lambda\left(\frac{|x|}{2\sqrt{it}}+\beta\sqrt{it}\right)-\Lambda\left(\frac{|x|}{2\sqrt{it}}-ik\sqrt{it}\right)}{(\beta+ik)\sqrt{it}}, \label{Eq_varphi_deltaprime_0} \\
\varphi_{\delta'}^{(1)}(t,x;k)&\coloneqq\frac{\sgn(x)}{2}e^{-\frac{x^2}{4it}}\Lambda\left(\frac{|x|}{2\sqrt{it}}-ik\sqrt{it}\right). \label{Eq_varphi_deltaprime_1}
\end{align}
\end{subequations}
By comparing (\ref{Eq_varphi_deltaprime_0}) with (\ref{Eq_varphi_difference_quotient}) we see that, besides the presence of the prefactor $\sgn(x)$,
both coincide if  $\alpha$ is replaced by $\beta$. Hence the same computations as in then proof of Lemma~\ref{lem_Existence_Propagator_delta} lead to
operators $U_{\delta',\pm}^{(0)}(t,x)$ and corresponding constants $\widetilde{S}_{B,\delta',\pm}^{(0)}(t,x)$ for $\varphi_{\delta'}^{(0)}(t,\pm x;\pm k)$.

For the treatment of (\ref{Eq_varphi_deltaprime_1}) we first use (\ref{Eq_Lambda_power_series}) to rewrite $\varphi_{\delta'}^{(1)}(t,x;k)$ as the series
\begin{equation*}
\varphi_{\delta'}^{(1)}(t,x;k)=\frac{\sgn(x)e^{-\frac{x^2}{4it}}}{2}\sum\limits_{n=0}^\infty\frac{(-1)^n}{\Gamma(\frac{n}{2}+1)}\left(\frac{|x|}{2\sqrt{it}}-ik\sqrt{it}\right)^n.
\end{equation*}
Proposition \ref{prop_Operator_representation} then gives a continuous operator $U_{\delta',+}^{(1)}(t,x):A_1(\mathbb{C})\rightarrow A_1(\mathbb{C})$ such that
\begin{equation*}
\varphi_{\delta'}^{(1)}(t,x;k)=U_{\delta',+}^{(1)}(t,x)e^{ik\xi}\big|_{\xi=0},
\end{equation*}
as well as a corresponding constant $S_{B,\delta',+}^{(1)}(t,x)$ in the bound  (\ref{Eq_Operator_boundedness}). The constant can be estimated by
\[
\begin{split}
S_{B,\delta',+}^{(1)}(t,x)&\leq\frac{1}{2}\sum\limits_{n=0}^\infty\frac{1}{\Gamma(\frac{n}{2}+1)}\left(\frac{|x|}{2\sqrt{t}}+B\sqrt{t}\right)^n
\\
&
=\frac{1}{2}\Lambda\left(-\frac{|x|}{2\sqrt{t}}-B\sqrt{t}\right)\eqqcolon\widetilde{S}_{B,\delta',+}^{(1)}(t,x).
\end{split}
\]
In the same way we get $U_{\delta',-}^{(1)}(t,x):A_1(\mathbb{C})\rightarrow A_1(\mathbb{C})$ and $\widetilde{S}_{B,\delta',-}^{(1)}$ for $\varphi_{\delta'}(t,-x;-k)$.
Now we sum up all terms to end up with
\begin{align*}
U_{\delta'}(t,x)&\coloneqq U_\text{free}(t,x)+U_{\delta',+}^{(0)}(t,x)+U_{\delta',-}^{(0)}(t,x)+U_{\delta',+}^{(1)}(t,x)+U_{\delta',-}^{(1)}(t,x), \\
\widetilde{S}_{B,\delta'}(t,x)&\coloneqq\widetilde{S}_{B,\text{free}}(t,x)+\widetilde{S}_{B,\delta',+}^{(0)}(t,x)+\widetilde{S}_{B,\delta',-}^{(0)}(t,x)+\widetilde{S}_{B,\delta',+}^{(1)}(t,x)+\widetilde{S}_{B,\delta',-}^{(1)}(t,x),
\end{align*}
and it follows that (\ref{Eq_Existence_Propagator_deltaprime}) and (\ref{Eq_Estimate_propagator_deltaprime}) are satisfied.
\end{proof}

Now we are ready to prove Theorem \ref{satz_Convergence_solutions}.
Since the argument is the same for the $\delta$-potential and the $\delta'$-potential, we will only consider the $\delta'$-case. From Lemma \ref{lem_Existence_Propagator_deltaprime} we know that the
solution $\Psi_{\delta'}$ of the Schr\"odinger equations (\ref{Eq_Schroedinger_deltaprime}) with initial datum $F(x)=e^{ikx}$ can be represented in the
form (\ref{Eq_Existence_Propagator_deltaprime}). Since the operator $U_{\delta'}(t,x)$ and the Schr\"odinger equation are both linear, we also get
\begin{equation*}
\Psi_n(t,x)=U_{\delta'}(t,x)F_n(\xi)\big|_{\xi=0}
\end{equation*}
for every solution $\Psi_n$ of (\ref{Eq_Schroedinger_deltaprime}) with initial datum $F_n$ of the form (\ref{Eq_Generalized_Fourier_series}).

\vspace{0.2cm}

Let now $(F_n)_n$ be the superoscillating sequence from (\ref{Eq_Convergence_initial_datum}). Then, from the definition of the $A_1$-convergence (\ref{Eq_A1_convergence}), we get the estimate
\begin{equation*}
\left|F_n(\xi)-e^{ik\xi}\right|\leq A_ne^{B|\xi|},\qquad\xi\in\mathbb{C},
\end{equation*}
where $B\geq 0$ is as in (\ref{Eq_A1_convergence}) and $A_n=\sup_{\xi\in\mathbb C}\vert (F_n(\xi)-e^{ik\xi})e^{-B\vert\xi\vert}\vert\rightarrow 0$ for $n\rightarrow\infty$.
Consequently, from the estimate (\ref{Eq_Estimate_propagator_deltaprime}) we get for any
compact $K\subseteq(0,\infty)\times\mathbb{R}$ the uniform convergence
\begin{align*}
\sup\limits_{(t,x)\in K}\big|\Psi_n(t,x)-\Psi(t,x)\big|&=\sup\limits_{(t,x)\in K}\big|U_{\delta'}(t,x)(F_n(\xi)-e^{ik\xi})|_{\xi=0}\big| \\
&\leq\sup\limits_{(t,x)\in K}A_n\widetilde{S}_{eB,\delta'}(t,x)e^{eB|\xi|}\big|_{\xi=0} \\
&=\sup\limits_{(t,x)\in K} A_n\widetilde{S}_{eB,\delta'}(t,x)\overset{n\rightarrow\infty}{\longrightarrow}0;
\end{align*}
note that $\sup_{(t,x)\in K}\widetilde{S}_{eB,\delta'}(t,x)$ is finite, which follows from the continuity
of $\widetilde{S}_{eB,\delta'}$; cf. Lemma~\ref{lem_Existence_Propagator_deltaprime}. This completes the proof of Theorem~\ref{satz_Convergence_solutions}.

\subsection*{Conflict of interest statement}
 On behalf of all authors, the corresponding author states that there is no conflict of interest.

\end{document}